\newtheorem{theorem}{Theorem}
\newtheorem{lemma}{Lemma}
\theoremstyle{definition}
\newtheorem{definition}{Definition}
\newtheorem{example}{Example}
\newtheorem{remark}{Remark}
\newcommand{\pkg}[1]{$\mathsf{ #1}$}
\def\cA{{\cal A}}
\def\cO{{\cal O}}
\def\blambda{{\boldsymbol \lambda }}
\def\argmin{\mathop{\rm \arg\min}}
\def\real{\mathop{{\rm I}\kern-.2em\hbox{\rm R}}\nolimits}
\def\diag{\mbox{diag}}
\def\bbeta{\boldsymbol \beta}
\def\lam{\lambda}
\def\T{{ \mathrm{\scriptscriptstyle T} }}
\newtheorem{condition}{Condition}
\begin{document}


\renewcommand{\baselinestretch}{1.2}

\markright{ \hbox{\footnotesize\rm Statistica Sinica
}\hfill\\[-13pt]
\hbox{\footnotesize\rm
}\hfill }

\markboth{\hfill{\footnotesize\rm Yang Feng and Yi Yu} \hfill}
{\hfill {\footnotesize\rm Leave-$n_v$-out CV for high-dimensional variable selection} \hfill}

\renewcommand{\thefootnote}{}
$\ $\par


\fontsize{10.95}{14pt plus.8pt minus .6pt}\selectfont
\vspace{0.8pc}
\centerline{\large\bf The restricted consistency property of leave-$n_v$-out}
\vspace{2pt}
\centerline{\large\bf  cross-validation for high-dimensional variable selection}
\vspace{.4cm}
\centerline{Yang Feng and Yi Yu}
\vspace{.4cm}
\centerline{\it Columbia University and University of Bristol}
\vspace{.55cm}
\fontsize{9}{11.5pt plus.8pt minus .6pt}\selectfont


\begin{quotation}
\noindent {\it Abstract:}
%
%
Cross-validation (CV) methods are popular for selecting the tuning parameter in the high-dimensional variable selection problem.  We show the mis-alignment of the CV is one possible reason of its over-selection behavior.  To fix this issue, we propose a version of leave-$n_v$-out cross-validation (CV($n_v$)), for selecting the optimal model among the restricted candidate model set for high-dimensional generalized linear models.  By using the same candidate model sequence and a proper order of construction sample size $n_c$ in each CV split, CV($n_v$) avoids the potential hurdles in developing theoretical properties.  CV($n_v$)  is shown to enjoy the restricted model selection consistency property under mild conditions.  Extensive simulations and real data analysis support the theoretical results and demonstrate the performances of CV($n_v$) in terms of both model selection and prediction.
\par

\vspace{9pt}
\noindent {\it Key words and phrases:}
Leave-$n_v$-out cross-validation; Generalized linear models; Restricted maximum likelihood estimators; Restricted model selection consistency; Variable selection.
\par
\end{quotation}\par

\def\thefigure{\arabic{figure}}
\def\thetable{\arabic{table}}

\fontsize{10.95}{14pt plus.8pt minus .6pt}\selectfont

\newpage
\lhead[\footnotesize\thepage\fancyplain{}\leftmark]{}\rhead[]{\fancyplain{}\rightmark\footnotesize\thepage}

\setcounter{chapter}{1}
\setcounter{equation}{0} 

\noindent {\bf 1. Introduction}

In recent years, massive high-throughput data sets are generated as a result of technological advancements in many fields.  Such data are featured by the large number of variables $p$  compared with the sample size $n$.  For an overview of the many challenges associated with high-dimensional statistical modeling, we refer the readers to \cite{FanLv2010}  and \cite{BuehlmannGeer2011}. 


A crucial goal in high-dimensional data analysis is to achieve a good balance between the goodness-of-fit and the complexity of the model, as both predictability and model interpretability are important to practitioners in many scientific fields. One popular avenue to achieve this balance is the imposition of penalties on the model complexity, which leads to simultaneous variable selection and parameter estimation in one single step.  Numerous efforts have been made, from both theoretical and numerical perspectives; to name but a few, \cite{Tibshirani1996} proposed Lasso, which is the $\ell_1$ penalty, or equivalently, \cite{ChenDonoho1994} proposed basis pursuit.  Also, folded-concave penalties including SCAD \citep{FanLi2001} and MCP \citep{Zhang2010} have been proposed and widely used over the years.  


One of the important aspects of penalization techniques is the tuning parameter, which determines how much penalty is imposed.  Over-penalization runs the risk of overlooking scientifically meaningful information; on the other hand, under-penalization may erroneously identify seemingly meaningful patterns that are actually the result of experimental noise.  It is, therefore, critical to choose the tuning parameter with care.
 
There has been an abundance of research on using certain kind of information criteria to select the tuning parameter; these include the generalized cross-validation \citep{Tibshirani1996, wang2007tuning}, the $C_p$ \citep{efron2004least}, the extended Bayesian information criterion (EBIC) \citep{ChenChen2008, luo2014sequential}, the modified BIC \citep{wang2009shrinkage}, the generalized information criterion \citep{ZhangEtal2010, Fan.Tang.2012}, etc.  Other related work includes selection of the tuning parameter through joint estimation of the regression coefficient and the standard deviation \citep{Stadler.Bulmann.ea.2010, Sun.Zhang.2011}. 

Another popular method for selecting the tuning parameter is cross-validation (CV), which is a data-driven method.  A vast amount of theoretical work has been done for CV in the fixed-dimensional linear regression models.  For example, leave-one-out CV (CV(1)) is shown to be asymptotically equivalent to Akaike information criterion (AIC), $C_p$, jackknife, and bootstrap \citep{Stone1977, Efron1983, Efron1986}.  \cite{Shao1993} proved the model selection inconsistency of CV(1) for the fixed-dimensional linear regression model.  In addition, for leave-$n_v$-out CV (CV($n_v$)),  the author gave the proper ratio of the size of construction set to that of validation set which turns out to be necessary for model selection consistency.  Here, by construction and validation data sets we mean the subsets used to construct and validate the estimators in CV splits.  However, $K$-fold CV, the most commonly used method, is well known for its conservativeness, i.e. the corresponding estimator selects far too many noise variables \citep{yu2014modified}.  As mentioned in \cite{ZhangHuang2008}, the theoretical justification of CV based tuning parameter is unclear for model selection purposes. \cite{yu2014modified} proposed  the modified cross-validation for high-dimensional linear regression models and showed that it outperforms the regular $K$-fold CV in numerical experiments. Compared with \cite{yu2014modified}, in this paper, we study the leave-$n_v$-out cross-validation for a sequence of candidate models from the whole data set and developed the restricted consistency results under the generalized linear model framework for high-dimensional variable selection. 

Another related work is relaxed Lasso \citep{Meinshausen},  which is a two-stage method, with the penalty at the second stage only operating on those variables being selected at the first stage.  The author conjectured that the $K$-fold CV for this two-step method will achieve model selection consistency. Different from \cite{Meinshausen}, in this paper we study the theoretical behaviors of the penalties, and mainly focus on the model selection instead of proposing a variant of Lasso procedure, with rigorous discussions on the asymptotic behavior of CV provided. 


The main contribution of the paper is two-fold: (1) investigations are conducted for the advantages and drawbacks of the commonly used CV methods for tuning parameter selection in the penalized estimation methods; (2) studying the leave-$n_v$-out cross-validation, which is shown to be consistent in a restricted sense for a wide range of penalty functions in the high-dimensional generalized linear model framework.  

We would like to introduce some of the notation used throughout this paper. For a $p$-dimensional vector $\boldsymbol{\beta}$ and an $n\times p$-dimensional matrix $A$, suppose $s$ is a subset of $\{1,\cdots,n\}$ and $\alpha$ is a subset of $\{1,\cdots, p\}$, then $\boldsymbol{\beta}_{\alpha}$ represents the subvector of $\boldsymbol{\beta}$ corresponding to $\alpha$, $A_s$ represents the submatrix of $A$ corresponding to rows with indices in $s$ and $A^{\alpha}$ represents the submatrix of $A$ corresponding to columns with indices in $\alpha$.  Let $|s|$ represent the cardinality of set $s$. In addition, define the $\ell_0$, $\ell_1$ and $\ell_2$ norms of $\boldsymbol{\beta}$ as $\|\boldsymbol{\beta}\|_0=\sum_{j=1}^p 1\{\beta_j\neq 0\}$, $\|\boldsymbol{\beta}\|_1=\sum_{j=1}^p |\beta_j|$ and $\|\boldsymbol{\beta}\|=[\sum_{j=1}^p\beta_j^2]^{1/2}$, respectively.  Let $g_1$ and $g_2$ be two functions of $n$. We use $g_1(n) = \Theta(g_2(n))$ to represent that they are asymptotically of the same order, i.e., there exist positive constants $c_1$ and $c_2$, such that 
	\[
		c_1 \le \liminf_n g_1(n)/g_2(n) \le \limsup_n g_1(n)/g_2(n) \le c_2.
	\]


\par

The rest of the paper is organized as follows. We introduce the generalized linear model setup and discuss the $K$-fold CV in Section 2. Motivated by the issues with the $K$-fold CV, we introduce the leave-$n_v$-out cross-validation (CV($n_v$)) for high-dimensional variable selection, and show CV($n_v$) can achieve the restricted model selection consistency in Section 3. We conduct extensive simulation studies in Section 4 and a real data analysis in Section 5 to compare CV($n_v$) with other types of CV methods as well as many state-of-the-art information criteria. We conclude the paper with a short discussion in Section 6 while all the technical details are collected in the Appendix.

\noindent {\bf 2. Model setup and $K$-fold cross-validation}

\noindent {\bf 2.1. Model setup}

Suppose we have $n$ i.i.d. observation pairs $(\boldsymbol{x}_i, y_i)$, $i = 1, \cdots, n$, where $\boldsymbol{x}_i$ is a $p$-dimensional predictor and $y_i$ is the response.  For generalized linear models, we assume the conditional distribution of $y$ given $ \boldsymbol{x}$ belongs to an exponential family with the canonical link and canonical parameter $\theta = \boldsymbol{x}^{\top}\boldsymbol{\beta}$; that is, it has the following density function,
\begin{align*}
f(y; \boldsymbol{x}, \boldsymbol{\beta}) = c(y,\phi)\exp ((y\theta - b(\theta))/a(\phi)),
\end{align*}
where $\phi \in (0, \infty)$ is the dispersion parameter, the functions $a(\cdot)$, $b(\cdot)$ and $c(\cdot, \cdot)$ are known and different for different models.  Let $\boldsymbol{\beta}^o$ be the true regression parameter, with $\|\boldsymbol{\beta}^o\|_0 = d_o$.  In the high-dimensional setting, $p$ may well exceed $n$ but $d_o$ is usually assumed to be strictly upper bounded by $n$, i.e., $d_o < n$.  Up to an affine transformation with $\theta_i = \boldsymbol{x}_i^{\top}\boldsymbol{\beta}$, the log-likelihood divided by the sample size is given by
\begin{align}\label{eq::log-lik}
\ell(\boldsymbol{\beta}) = n^{-1}\sum_{i=1}^n \{y_i\theta_i - b(\theta_i)\}.
\end{align}
Minimizing the penalized negative log-likelihood function leads to the following estimator.
\begin{align}\label{eq::pen-log-lik}
	\hat{\boldsymbol{\beta}}(\lambda) = \argmin_{\boldsymbol{\beta} \in \mathbb{R}^p} \{-\ell(\boldsymbol{\beta}) + p_{\lambda, \gamma}(\boldsymbol{\beta})\},
\end{align}
where $p_{\lambda, \gamma}(\cdot) $ is the penalty function. 

Given subset $s\subset \{1,\cdots, n\}$,   the  log-likelihood function evaluated on the subset $s$ is
\begin{align}\label{eq::log-lik-s}
	\ell^{(s)}(\boldsymbol{\beta}) = (|s|)^{-1}\sum_{i\in s} \{y_i\theta_i - b(\theta_i)\}.
\end{align}
Then the corresponding minimizer of the penalized negative log-likelihood is
\begin{align}\label{eq::pen-log-lik-s}
	\hat{\boldsymbol{\beta}}^{(s)}(\lambda) = \argmin_{\boldsymbol{\beta} \in \mathbb{R}^p} \{-\ell^{(s)}(\boldsymbol{\beta}) + p_{\lambda, \gamma}(\boldsymbol{\beta})\},
\end{align}

In this paper, we only consider separable sparsity-inducing penalties; that is, there exists a non-negative function $\rho(\cdot)$, such that for any vector $\boldsymbol{\beta} = (\beta_1, \cdots, \beta_p)^{\top}$, the penalty function $p_{\lambda, \gamma}(\cdot)$ satisfies
\begin{align}\label{sep}
p_{\lam, \gamma}(\bbeta) = \sum_{j=1}^p \rho(|\beta_j|; \lam, \gamma),
\end{align}
where $\lam$ and $\gamma$ are the parameters of the penalty function and the minimizer of the penalized negative log-likelihood leads to a sparse solution.  Both convex and folded-concave penalties can be written in the form of \eqref{sep}.  For convex penalties, such as Lasso \citep{Tibshirani1996}, $\gamma = \infty$; while for folded-concave penalties, $0 < \gamma < \infty$.  In the penalty function \eqref{sep}, $\gamma$ is  a  parameter controlling the concavity of the penalty, and  in this paper we only focus on studying the collection of solutions as $\lambda$ changes while fixing $\gamma$.

 A popular class of algorithms on solving \eqref{eq::pen-log-lik} are called path algorithms. Many different path algorithms have been proposed, including  forward regression, stepwise regression, \pkg{lars} \citep{efron2004least}, \pkg{glmpath} \citep{ParkHastie2007}, \pkg{glmnet} \citep{FriedmanHT2010}, \pkg{ncvreg} \citep{BrehenyHuang2011}, \pkg{apple}  \citep{yu2014apple}, among others.   In a  path algorithm, a collection of (usually sparse) estimators  $\{\hat\bbeta_r, r=1,\cdots,R\}$ are generated, where $R$ represents the total number of candidate estimators.    Then the natural question is to choose the best estimate $\hat\bbeta_{\hat r}$  among the $R$ candidates according to certain criteria. 

\noindent {\bf 2.2. Cross-validation}

There are many different versions of CV, to avoid ambiguity, we describe $K$-fold CV with \pkg{glmnet} and \pkg{ncvreg} in the penalized negative log-likelihood context in Algorithm \ref{al:Kfcv}.

\begin{algorithm}[t]
	
\caption{$K$-fold CV for a typical path algorithm.}\label{al:Kfcv}
\begin{enumerate}
	\item[S1.] Using the whole data, generate a data-driven penalty parameter sequence $\blambda = \{\lambda_1, \cdots, \lambda_R\}$. Compute the solution path  $\{\hat{\boldsymbol{\beta}}_r, \, r= 1, \cdots, R\}$, where $\hat\bbeta_r = \hat\bbeta(\lambda_r)$. 
	\item[S2.] Randomly divide the data set into $K$ folds, and denote the indices of each fold as $s_k$, $k = 1, \cdots, K$,  and $s_{(-k)} =  \{1,\cdots, n\} \setminus s_k$.
	\item[S3.] For each fold $k=1,\cdots, K$
	\begin{enumerate}
	\item 
	 Using the construction data in $s_{(-k)}$, generate its own penalty parameter sequence $\blambda^{(-k)} = \{\lambda_1^{(-k)}, \cdots, \lambda_R^{(-k)}\}$. 
	 \item Compute the corresponding solution path $\{\hat{\boldsymbol{\beta}}^{(-k)}_r, \, r = 1, \cdots, R\}$, where $\hat\bbeta^{(-k)}_r = \hat\bbeta^{(s_{(-k)})}(\lambda^{(-k)}_r)$ is the penalized estimator defined in \eqref{eq::pen-log-lik-s} with penalty parameter $\lambda_r^{(-k)}$.
	 \item Evaluate the prediction performance of $\{\hat{\boldsymbol{\beta}}^{(-k)}_r, \, r = 1, \cdots, R\}$ on the validation data in  $s_k$ using the negative log-likelihood function. The resulting values are denoted by $\{L^k_r, r=1,\cdots, R\}$, where $L^k_r = -\ell^{(s_k)}(\hat\bbeta^{(-k)}_r)$ as defined in \eqref{eq::log-lik-s}.
		\end{enumerate}
	\item[S4.] Calculate the average criterion values $\{L_r, \, r=1,\cdots, R\}$ where $L_r = K^{-1}\sum_{k=1}^K L^k_r$. Output the optimal location $\hat r = \arg\min_{r =1, \ldots, R} L_r$ along with its corresponding solution $\hat\bbeta_{\hat r}$. 
\end{enumerate}

\end{algorithm}

In Algorithm~\ref{al:Kfcv}, to compare the performance of $\{\hat{\boldsymbol{\beta}}_r, r = 1,\cdots, R\}$, we are averaging the prediction performance on the corresponding validation set $s_k$ over the $K$ folds using the estimator $\hat\bbeta_r^{(-k)}$ from the construction set $s_{(-k)}$. However, there is no guarantee that we are averaging across the same models or the same tuning parameters across different folds.  In path algorithms, the tuning parameters are determined by the construction data set, and the estimators are determined by the tuning parameters and the construction data set. 

\begin{remark}
In some other path algorithms including \pkg{lars} and \pkg{glmpath}, instead of starting with a sequence of data-driven penalty parameters, they proceed by adaptively adding/deleting one predictor at a time from the model and provide the corresponding $\hat\bbeta_r$ after each operation. Note that the solution $\hat\bbeta_r$ from such path algorithms would also implies a certain value of $\lambda_r$ in \eqref{eq::pen-log-lik}. As a result, the preceding discussions regarding the averaging process also apply to those algorithms. 
\end{remark}

\begin{remark}
In practice, it is also common to use the tuning parameter sequence $\boldsymbol{\lambda}$ generated by the whole data in all splits.  Although this guarantees the alignment of tuning parameters across different splits, this still causes mis-alignment in terms of model sequences, and this actually could cause more problems due to the fact that a desirable tuning parameter should be a function of the sample size.  In any case, it would be very difficult to link the chosen tuning parameter from the splits with its performance in the whole data.
\end{remark}

We conduct a simple simulation for a high-dimensional linear regression example with the 5-fold CV. In Figure~\ref{fig:ae1}, we show the corresponding results of two different construction data sets when performing the CV.  In the left panel we show the first 30 values of $\lambda$ on each path, with the $x$-axis being the location indices; in the right panel, the sequences of the model sizes are presented against their locations on the solution paths.  The CV averages models across different splits, but as we can see from Figure~\ref{fig:ae1}, the corresponding $\lambda$ sequences and the model size sequences are both very different for those two different splits. As a result, it is very difficult to derive any theoretical justification for either model selection or tuning parameter selection property of the CV tuned estimator.  More numerical results on the alignment issue of the CV will be shown in Section~4.1.

\begin{figure}[h]
\centering
	\includegraphics[width = \textwidth]{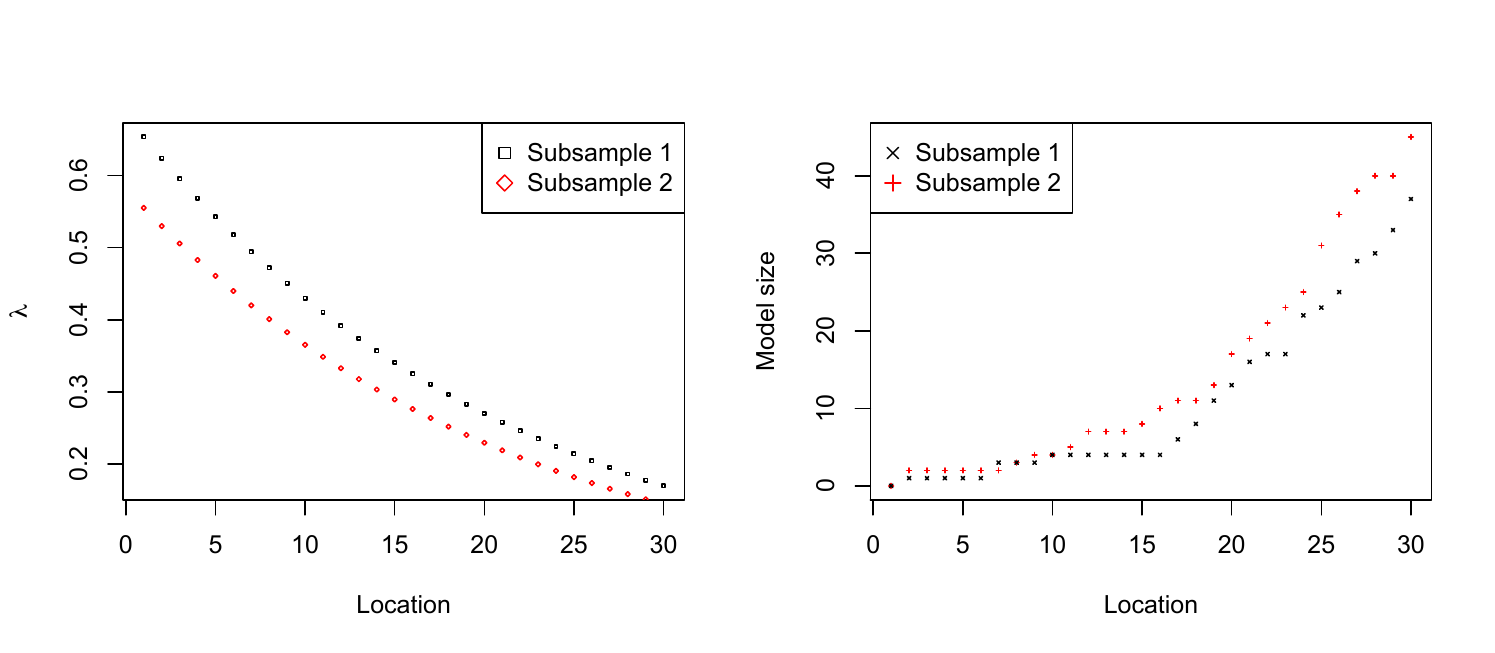}
	\caption{An example on $K$-fold cross-validation}
	\label{fig:ae1}
\end{figure}

\setcounter{chapter}{3}
\setcounter{equation}{0} 
\noindent {\bf 3. Leave-$n_v$-out cross-validation}\\
With a better understanding of the  issues of the CV in Section~2,  we propose a version of CV($n_v$) in this section.  We first introduce some key concepts regarding model selection in Section 3.1 and CV($n_v$), then point out its major differences from the CV in Section 3.2. In Section 3.3, we show that CV($n_v$) is {\it restricted model selection consistent} (to be defined formally in Section 3.1)   under mild technical conditions in the generalized linear model framework for both convex and folded-concave penalties.

\noindent {\bf 3.1. Key concepts} \label{subsec-key}

From the solution path $\{\hat\bbeta_r,\, r=1,\cdots, R\}$ in Algorithm 1, one can get a corresponding path of models $\cA = \{\alpha_r,\, r=1,\cdots, R\}$, where $\alpha_r = \{j\in\{1,\cdots,p\}: \,(\hat\beta_r)_j\neq 0\}$ is the indices with nonzero coefficient estimates. 
Similar to \cite{Shao1993}, we divide $\cA$ into two disjoint subsets: $\cA_c$ and its complement $\cA \setminus \cA_c$, where $\cA_c = \{\alpha \in \cA: \,(X^{\alpha})\boldsymbol{\beta}_{\alpha}^o = X\boldsymbol{\beta}^o\}$.   Here we give three definitions which constitute the fundamental concept of this paper. 

\begin{definition}[True model]\label{def-true-model}
The true model is defined as $\cO = \{j:\, \beta_j^o\neq 0\}$.
\end{definition}

Here,  for any estimated model $\hat\cO$, we define its false negative (FN) to be $|\cO\setminus \hat\cO|$ and its false positive (FP) to be $|\hat\cO\setminus \cO|$. 
Then, for the models in $\cA_c$, $\textsc{FN} = 0$; for the models in $\cA\setminus\cA_c$, $\textsc{FN} > 0$. 

\begin{definition}[Optimal model set]\label{def-optimal}
Let $d_* = \min_{\alpha \in \mathcal{A}_c}|\alpha|$.  Define the optimal model set as $\alpha_* = \{\alpha \in \mathcal{A}_c:\,|\alpha| = d_*\}$.
\end{definition}

When $|\alpha_*| = 1$, there is only one optimal model and with slight abuse of notation, we call $\alpha_*$ as the optimal model. The optimal models can be different from the true model. They are the sparsest models without false negatives. 

\begin{remark}
For any model $\alpha \in \mathcal{A}$, define its fitted risk as follows:
\begin{align*}
	R(\alpha) = \sup_{\boldsymbol{x}\in \mathbb{R}^p: \|\boldsymbol{x}\| = 1}(\boldsymbol{x}_{\alpha}^{\top}\boldsymbol{\beta}_{\alpha}^o-\boldsymbol{x}^{\top}\boldsymbol{\beta}^o)^2 = \bigl\|\boldsymbol{\beta}^o_{-\alpha}\bigr\|^2.
\end{align*}
It is obvious that if $\alpha \in \mathcal{A}_c$, then $R(\alpha) = 0$; otherwise, $R(\alpha) > 0$.  

We now demonstrate the differences between the true model and the optimal model (set) via a toy example. In a linear regression setting, assume the true regression coefficient $\boldsymbol{\beta}^o \in\mathbb{R}^{100}$ being $\beta^o_j = 1$, for $j = 1, \cdots, 5$ and $\beta^o_j = 0$, for $j = 6, \cdots, 100$.  Then, the true model $\mathcal{O} = \{1, 2, 3, 4, 5\}$.  If the candidate models are as follows: $\alpha_1 = \{1,2,3 \}$, $\alpha_2 = \{1,2,3, 4\}$, $\alpha_3 = \{1,2,3, 4, 5, 6\}$ and $\alpha_4 = \{1,2,3, 4, 5, 6, 7\}$; note that the true model is not among the candidate models here.  Both models $\alpha_1$ and $\alpha_2$ miss  at least one important variables, with $R(\alpha_1) = 2$ and $R(\alpha_2) = 1$.  The true model is a subset of both $\alpha_3$ and $\alpha_4$, and $R(\alpha_3) = R(\alpha_4) = 0$.  In this situation, we $\alpha_3, \alpha_4 \in \mathcal{A}_c$.  Recall the definition of the optimal model (set), then we know $\alpha_3$ is the optimal one, since it contains fewer false positive than $\alpha_4$. As a result, it is reasonable to target on the optimal model (set) when the true model is out of reach. 
		

\end{remark}

\begin{definition}[Restricted model selection consistency]\label{def-consistency}
We say that a method has the restricted model selection consistency property, if the selected model $\hat{\alpha}_n$ satisfies 
\begin{align*}
\lim_{n\to \infty} \mathrm{pr}\{\hat{\alpha}_n \in  \alpha_*\} = 1.
\end{align*}
\end{definition}



Here, we do not require any specific path algorithm, but start with a collection of candidate models.  As a result, in Definition~\ref{def-consistency}, by \emph{restricted model selection consistency}, we mean that the selected model is in the optimal model set with probability tending to 1.  This is different from the model selection consistency, which  means $\lim_{n\to \infty} \mathrm{pr}\{\hat{\alpha}_n = \mathcal{O}\} = 1$ in our setup.  However, those two properties coincide when the true model is an available candidate, i.e. $\mathcal{O} \in \mathcal{A}$.

\noindent {\bf 3.2. Methodology}

\begin{algorithm}[t]
	
\caption{CV($n_v$) for a typical path algorithm.}\label{al:CCV}
\begin{enumerate}
	\item[S1.] Compute the solution path  $\{\hat{\boldsymbol{\beta}}_r, \, r= 1, \cdots, R\}$ using a given path algorithm with the whole data. 
	Obtain the sequence of models  $\{\alpha_1,\cdots,\alpha_R\}$, where $\alpha_r$ is the support of $\hat{\boldsymbol{\beta}}_r$.
	\item[S2.] Independently draw validation sets  $\{s_k, \, k=1,\cdots, K\}$, where $s_k\subset \{1,\cdots,n\}$ with $|s_k|=n_v$.  Let $s_{(-k)} =  \{1,\cdots, n\} \setminus s_k$ represent the corresponding construction set $s_{(-k)}$ with $|s_{(-k)}|=n_c$.
	\item[S3.] For each  $k=1,\cdots, K$
	\begin{enumerate}
	\item 
	 Using the construction data in $s_{(-k)}$, compute the collection of solutions $\{\tilde{\boldsymbol{\beta}}^{(-k)}_r, \, r = 1, \cdots, R\}$, where
	  \begin{align}\label{eq::rmle}
	 	\tilde\bbeta^{(-k)}_r=\argmin_{\stackrel{\bbeta\in \mathbb{R}^p,}{ \bbeta_{(-\alpha_r)}= \boldsymbol{0}}}\bigl\{ -\ell^{(s_{-k})}(\bbeta)\bigr\},
	 \end{align}
	 where $\ell^{(s_{(-k)})}(\cdot)$ is defined in \eqref{eq::log-lik-s}.
	 \item Evaluate the prediction performance of $\{\tilde{\boldsymbol{\beta}}^{(-k)}_r, \, r = 1, \cdots, R\}$ on the validation set $s_k$ using the negative log-likelihood function. The resulting values are denoted by $\{L^k_r, \, r=1,\cdots, R\}$, where $L^k_r = -\ell^{(k)}(\tilde\bbeta^{(-k)}_r)$. 
	\end{enumerate}
	\item[S4.] Calculate the average criterion value $\{L_r, \, r=1,\cdots, R\}$ where $L_r = K^{-1}\sum_{k=1}^K L^k_r$. Output $\hat r = \arg\min_{r \in \{1, \cdots, R\}} L_r$ along with its corresponding unpenalized solution $\tilde\bbeta_{\hat r}$ as in \eqref{eq::rmle}.
\end{enumerate}

\end{algorithm}

The detailed algorithm of CV($n_v$) for the high-dimensional penalized regression is elaborated in Algorithm~\ref{al:CCV}.  The main idea is to use the whole data set to derive the collection of solutions and the corresponding model sequence.  The problem of selecting the optimal solution is then reduced to the choice of the optimal model.  In this sense, we recast the tuning parameter selection problem for high-dimensional generalized linear models to the problem of model selection of low-dimensional ones, and the models across different splits are exactly the same, therefore the averaging has intuitive meanings.

Another key ingredient of CV($n_v$) is the choice of $n_c$ and $n_v$, i.e., the sample sizes of the construction and validation subsets. Following \cite{Shao1993, Shao1996}, we choose $n_c$ and $n_v$ such that $n_c/n\to 0$ and $n_c \to \infty$, as $n\to \infty$.  This is different from the $K$-fold CV methods, where a larger proportion of data are used for construction and a smaller proportion for validation.  We would like to briefly explain the intuition of the specific sample splitting.  Note that the purpose of the CV  is to select the best model among the candidates, as a result, besides having an accurate estimation for each model (when $n_c \to \infty$),  perhaps more importantly, we would need a sufficiently large ($n_c/n\to 0$) validation set   to detect the subtle differences among the models.  This is particularly challenging in the high-dimensional settings as there are many possible candidate models. The popular $10$-fold CV, for example, only uses 1/10 of the data for the validation set, which is proved to be too small for the purpose of model selection.

We now present the behavior of CV($n_v$) when $n_v$ varies through a simulation study. In Figure~\ref{fig::nc-rate}, we present the average false positive (FP) and false negative (FN) of CV($n_v$) with a wide range of $n_c$ in linear and logistic regression problems with $n = 500$ and $p = 1000$, with details of setting left in Example \ref{ex::CCV}.  From Figure~\ref{fig::nc-rate}, it is clear that in all cases, the larger the order of $n_c$ is, the more FP are involved, but the less FN. For linear regression, $n_c  = \lceil n^{1/2} \rceil$ has the best performance, while $n_c = \lceil n^{3/4} \rceil$ is the best for logistic regression. The different behaviors for linear regression and logistic regression are due to the fact that when the covariates and the coefficients are the same, the logistic regression needs a larger sample size to fit the model well  compare with linear regression. We provide some intuition as follows. 
Under the canonical link, the Fisher information for generalized linear models can be written as $1/a(\phi)X^{\top}WX$, where $\phi$ is the dispersion parameter. For logistic regression, $W = \diag\{\pi_1(1-\pi_1),\cdots, \pi_n(1-\pi_n)\}$, where $\pi_i = \exp(\boldsymbol{x}^{\top}_i\boldsymbol{\beta})/(1+ \exp(\boldsymbol{x}^{\top}_i\boldsymbol{\beta})) < 1$ in non-degenerate cases; while for linear regression, $W = I_n$. This indicates that logistic regression always has less information than linear regression, which leads to the fact that compared with linear regression, we need a larger sample size for the logistic regression to have the same level of estimation accuracy. 

We conclude that in order to achieve the restricted model selection consistency property, a small $n_c$ rate should be chosen as long as the size of the construction sample is large enough to provide accurate estimates. Despite the comparison above, the optimal $n_c$ rates may change for different settings. In any case, CV($n_v$) with a wide range of $n_c$ values all lead to a better performance than the 10-fold CV as well as AIC and BIC.

\begin{figure}[htbp]
\caption{The average false positive (FP) and false negative (FN) of CV($n_v$) for different $n_c$ values in Example  \ref{ex::CCV}.\label{fig::nc-rate}}
\begin{subfigure}[b]{0.5\textwidth}
\caption{Linear regression, $\rho=0$}
	\includegraphics[scale=0.32]{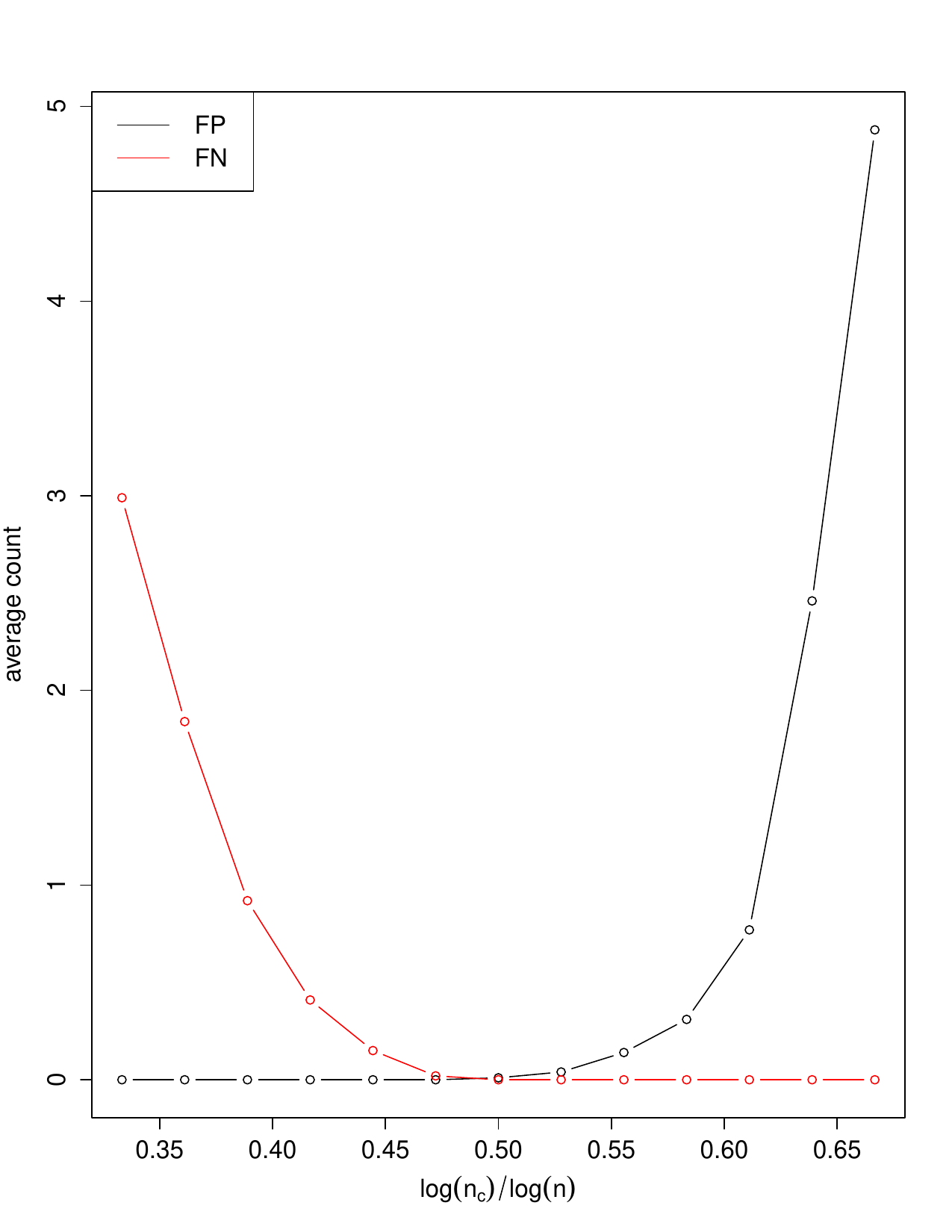}
\end{subfigure}\begin{subfigure}[b]{0.5\textwidth}
\caption{Linear regression, $\rho=0.5$}
	\includegraphics[scale=0.32]{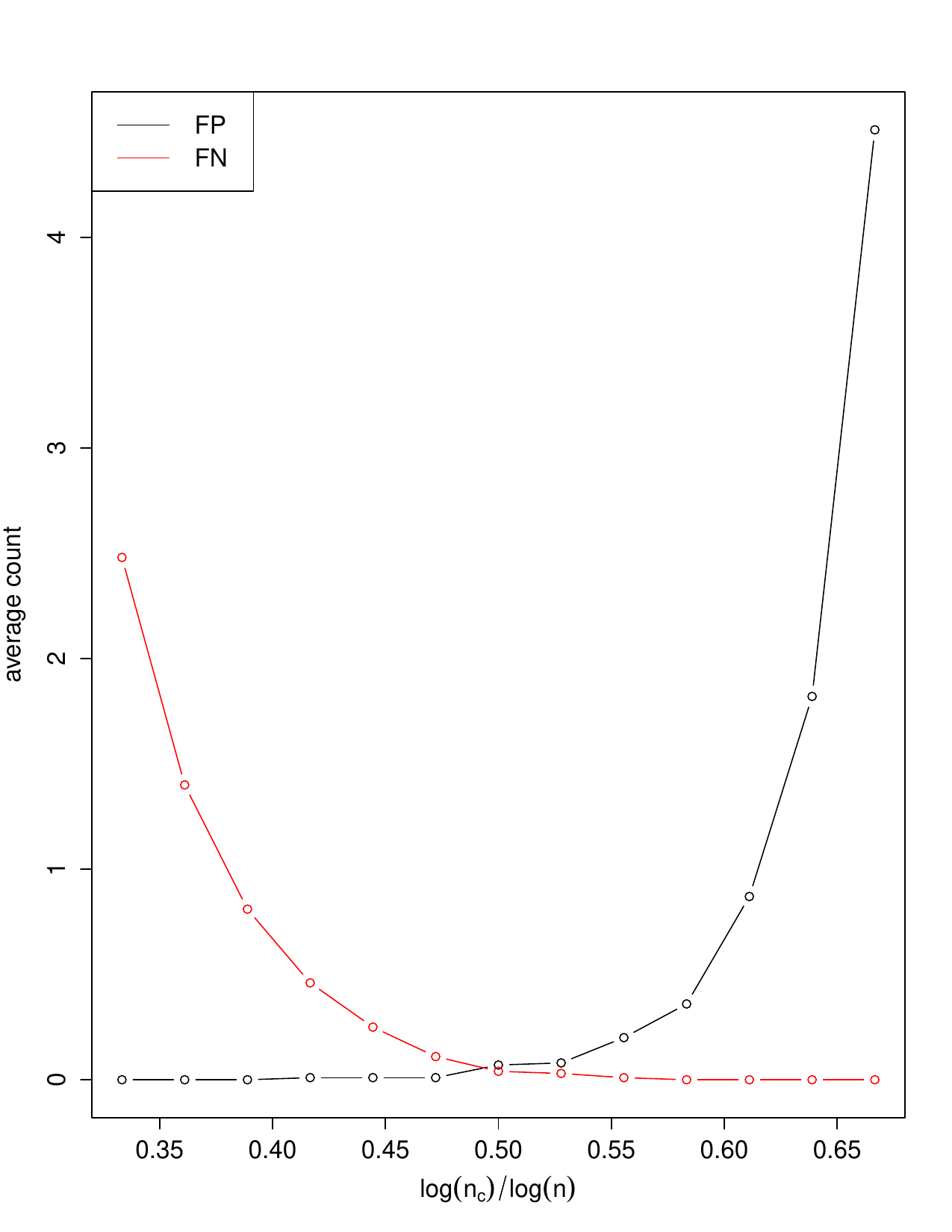}
\end{subfigure}\\
\begin{subfigure}[b]{0.5\textwidth}
\caption{Logistic regression, $\rho=0$}
	\includegraphics[scale=0.32]{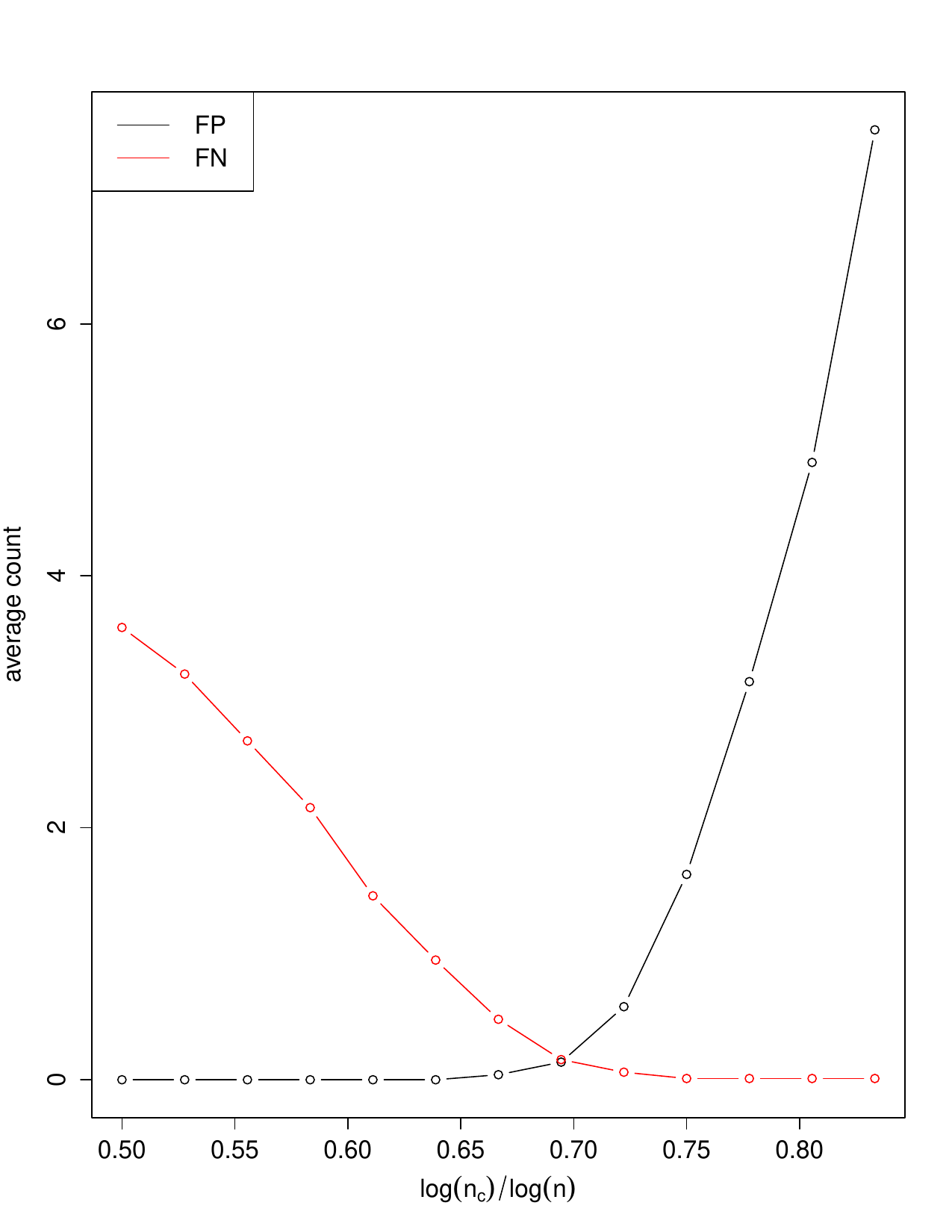}
\end{subfigure}\begin{subfigure}[b]{0.5\textwidth}
\caption{Logistic regression, $\rho=0.5$}
	\includegraphics[scale=0.32]{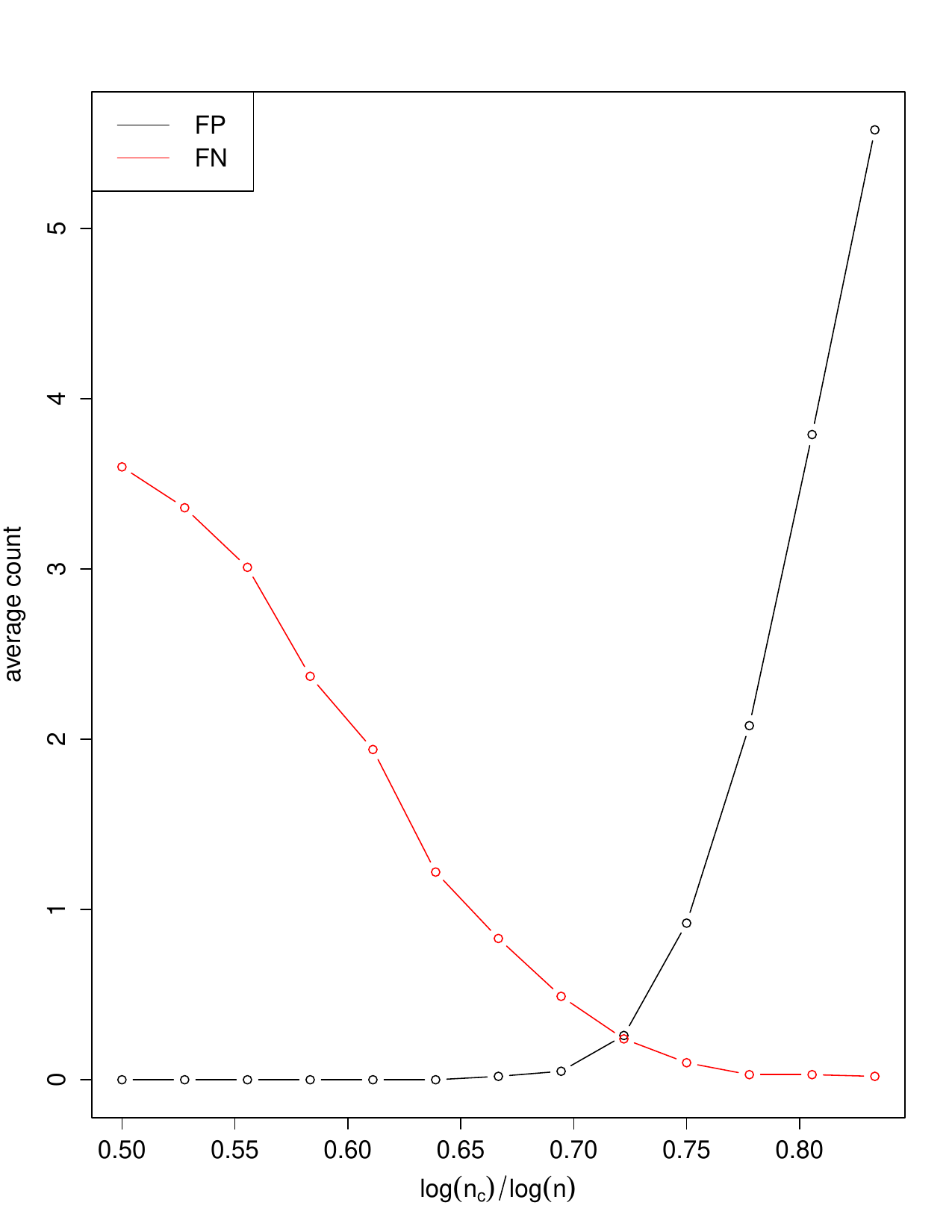}
\end{subfigure}
\end{figure}

Different from \cite{Shao1993, Shao1996}, we are studying a high-dimensional variable selection problem, which leads to fundamental technical differences.  We allow the number of candidate models to diverge, as stated in Condition~\ref{Ac} below, while in \cite{Shao1993, Shao1996} this quantity is a fixed constant.  

\noindent {\bf 3.3. Theory}

Before presenting the theory, we introduce several conditions.

\begin{condition}\label{nonempty}
The set $\mathcal{A}_c$ is not empty.
\end{condition}
Condition \ref{nonempty} is usually satisfied when the penalty parameter $\lambda$ is small enough, and it  ensures that the problem we are trying to solve is not degenerate. 

\begin{condition}[Beta-min]\label{beta} 
For the true model $\cO$, let $\sigma^2 = \mathrm{var}(y)$, we assume\begin{align*}
 \beta_* = \min_{j\in\cO} \left|\beta^o_j\right|\gg \sigma \sqrt{\frac{\log p}{n}}.
\end{align*}
\end{condition}
Condition \ref{beta} is common in the high-dimensional sparse recovery literature, which guarantees that the signal variables are detectable from the noise variables.  If $p = O\left(\exp\left(n^a\right)\right)$, $0<a<1$, then $\beta_* = \Theta(1)$ is sufficient to satisfy this condition.  In fact, $\beta_*$ can go to zero slowly as $n$ and $p$ diverge.

\begin{condition}[Candidate set]\label{Ac} Denote $d_{\max} = \max_{\alpha\in\cA_c}|\alpha|$, $d^* = \max\{d_{\max} - d_*, d_*\}$. Assume $n_c d^* \ll n$ and
\begin{align}\label{N-rate}
R = o\left(\exp\left(n/(n_cd^*)\right)\right).
\end{align}

\end{condition}

Condition \ref{Ac} assures the candidate set is well behaved. The possible size of the candidate set $R$ can diverge at the rate in \eqref{N-rate}. We allow the number of candidate models to diverge as long as $n_c d^* \ll n$.  For instance, if $d^*$ is bounded and $n_c = O(n^{1/2})$, then $R = o(\exp(n^{1/2}))$.  

In the fixed $p$ scenario, the candidate set can be all the possible $2^p$ models.  When we allow both $p$ and $n$ to diverge, we are aware that the number of the candidate models increases too, although in practice this is usually a fixed number, say, $R = 100$ in the default setting in the \pkg{glmnet} package in \pkg{R}.  We can actually control an increasing number of candidate models by exploiting concentration inequalities.  Condition~\ref{Ac} gives the limit of this quantity.

\begin{condition}[Generalized linear models properties]\label{asy}
(i) Assume that $b(\cdot)$ has continuous  first, second and third order derivatives $\dot b(\cdot)$, $\ddot b(\cdot)$ and $\dddot b(\cdot)$; in addition, $\ddot b(\cdot)> 0$; (ii) there exists a function $h(\cdot)$ and $\epsilon_0>0$ such that for any $\alpha\in \cA_c$ and $\boldsymbol{\eta}_{\alpha} \in \{\boldsymbol{\zeta}_{\alpha}: \|\boldsymbol{\zeta}_{\alpha} - \boldsymbol{\beta}_{\alpha}\| \le \epsilon_0\}$, we have  $E(h(\boldsymbol{x})) < \infty$, $E(h_{\alpha}(\boldsymbol{x}_{\alpha})) < \infty$, $\|\ddot b(\boldsymbol{x}_{\alpha}^{\top}\boldsymbol{\eta}_{\alpha})\|^2 \le h_{\alpha}(\boldsymbol{x}_{\alpha})$, and $\|\dddot b(\boldsymbol{x}_{\alpha}^{\top}\boldsymbol{\eta}_{\alpha})\|^2 \le h_{\alpha}(\boldsymbol{x}_{\alpha})$  where $h_{\alpha}(\cdot)$ is the function $h(\cdot)$ restricted to the subspace spanned by $\boldsymbol{x}_{\alpha}$.
\end{condition}

This is a mild condition for generalized linear models. For example, it is easy to verify that the linear regression model satisfies Condition~\ref{asy}, since $b(\theta)=\theta^2/2$, in which case the function $h(\cdot)$ can be set as a constant function.

\begin{condition}[Invertibility condition]\label{IC} 
There exist $c_* > 0$ and $q^* = \Theta(\sqrt{n/\log p})$, such that for all $A \subset \{1, \cdots, p\}$ with $|A| = q^* \ge d_* \ge d_0$, and for any $\boldsymbol{\eta}_A \in \{\boldsymbol{\zeta}_A:\, \|\boldsymbol{\zeta}_A - \boldsymbol{\beta}_A\| \le \varepsilon_0\}$, where $\varepsilon_0 > 0$ is fixed, and if $\boldsymbol{v} \neq \boldsymbol{0}$ is a $q^*$-dimensional vector, we have, 
\begin{align*}
\mathrm{pr}\Bigl\{ c_* \le \bigl\|\bigl(\ddot b (X_A\boldsymbol{\eta}_A)\bigr)^{1/2}X_A \boldsymbol{v}\bigr\|^2/(n\|\boldsymbol{v}\|^2)\Bigr\} \to 1, \quad n\to \infty.
\end{align*}
\end{condition}

This condition indicates that in any manifold of dimension less than or equal to $q^*$, its corresponding restricted MLE is well-defined and unique.  This is a weaker version of the Sparse Riesz Condition \citep{ZhangHuang2008}, in which both upper and lower bounds are required. The Sparse Riesz Condition (or a similar condition) was imposed in the existing literature on the tuning parameter selection consistency using information criteria \citep{ZhangEtal2010}.  With the invertibility condition, we can safely terminate the evaluation on the path when the current model size exceeds $q^*$, without the risk of missing the optimal model. 

\begin{condition}[Design matrix]\label{design}  For all $A \subset \{1, \cdots, p\}$ with $|A| = q^*$, where $q^*$ is defined in Condition \ref{IC}, and for any $\boldsymbol{\eta}_A \in \{\boldsymbol{\zeta}_A: \|\boldsymbol{\zeta}_A - \boldsymbol{\beta}_A\| \le \epsilon_0\}$, where $\epsilon_0 > 0$ is a given constant, the following is satisfied, 
\begin{align*}
\max_{s\in \mathcal{S}} \left\|\frac{1}{n_v} (X_s^A)^{\top}\ddot b (X_s^A\boldsymbol{\eta}_A)(X_s^A) - \frac{1}{n_c} (X_{s^c}^A)^{\top}\ddot b (X_{s^c}^A \boldsymbol{\eta}_A)X_{s^c}^A\right\| = o_p(1),
\end{align*}
where the norm here is the operator norm of matrices, $s^c = \{1, \cdots, n\}\setminus s$ and $\mathcal{S}$ is the collection of splits.
\end{condition}

This condition bounds the difference between the Fisher information of the validation set and the construction set.  This is a reasonably mild condition with the technical details of its corresponding version for linear models  studied in Section 4.4 of \cite{Shao1993}.

\begin{theorem}\label{con}
For the penalized generalized linear models with separable sparse-inducing penalties, assume Conditions~\ref{nonempty}-\ref{design} hold with $n_c/n\to 0$, $n_c\to \infty$,  and the number of the splits $K$ satisfies
\begin{align*}
K^{-1}n_c^{-2}n^2 \to 0.
\end{align*}
Then, CV($n_v$) achieves restricted model selection consistency.
\end{theorem}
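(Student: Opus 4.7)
The plan is to reduce the consistency question to two clean sub-problems -- an almost-sure separation of incorrect models from correct ones driven by bias, and an in-probability separation of $\alpha_*$ from larger correct models driven by the variance penalty of order $1/n_c$ -- by Taylor-expanding $\hat{\Gamma}_\alpha$ around the restricted MLE limit and then applying a union bound over the $N$ candidates. Because CCV evaluates the criterion at restricted MLEs rather than at penalized estimators, the shrinkage term $\lam_\alpha^2 d_\alpha$ that appeared in Proposition~\ref{prop:shrinkage-effect} disappears, and what remains is essentially the structure Shao exploited in the linear case, lifted to the GLM setting. Conditions~\ref{Ac}, \ref{asy}(1b), and \ref{IC} jointly justify the Taylor expansion uniformly over $\alpha$ with $\|\alpha\|_0 \le q^*$, while Condition~\ref{beta} places $\alpha_*$ on the path with probability tending to one.

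For $\alpha \in \cA \setminus \cA_c$, I would Taylor-expand the per-split criterion around the KL-projected parameter $\bbeta^*_\alpha$ (which differs from the truth since $\alpha$ omits some relevant covariate). The leading term is a KL-type gap $\Delta_{\alpha,n}$, bounded below by a positive constant under Condition~\ref{beta} because the missed coefficient satisfies $|\beta_j| \gg \sigma \sqrt{\log p/n}$; the linear and quadratic corrections are $O_p(d^*/n_c) = o_p(1)$ by Condition~\ref{Ac} together with the restricted MLE rate $\|\tbbeta_{s^c,\alpha} - \bbeta^*_\alpha\| = O_p(\sqrt{d_\alpha/n_c})$. Hence $\hat{\Gamma}_\alpha - \hat{\Gamma}_{\alpha_*} \ge \Delta_{\alpha,n}(1 + o_p(1)) > 0$ uniformly on this sub-collection after a union bound.

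For $\alpha \in \cA_c$ with $\alpha \ne \alpha_*$, both models parametrize the same conditional distribution, so the zeroth-order piece and the validation score at $\bbeta_\alpha$ and $\bbeta_{\alpha_*}$ coincide. The difference $\hat{\Gamma}_\alpha - \hat{\Gamma}_{\alpha_*}$ reduces to a quadratic term plus a mean-zero cross term between the validation score and the construction-set estimator. Invoking the Design Matrix Condition~\ref{design} to substitute $\bX_{s,\alpha}'\ddot{\bb}\bX_{s,\alpha}/n_v$ by $\bX_{s^c,\alpha}'\ddot{\bb}\bX_{s^c,\alpha}/n_c$, the quadratic term averaged over $r$ splits has positive mean of order $(d_\alpha - d_*)/n_c$. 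The hypothesis $r^{-1} n_c^{-2} n^2 \to 0$ is precisely what makes the Monte Carlo variance across splits negligible relative to this signal, and sub-Gaussian concentration combined with a union bound over $N = o(\exp(n/(n_c d^*)))$ candidates from Condition~\ref{Ac} closes the argument.

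The main technical obstacle will be making the GLM Taylor expansion uniform over $\alpha$ and over the $r$ splits. Unlike the linear case, the cubic remainder involves $\dddot{\bb}$ evaluated at interior points, so one must first use the Invertibility Condition to localize $\tbbeta_{s^c,\alpha}$ in a $\sqrt{d_\alpha/n_c}$-neighborhood of $\bbeta^*_\alpha$ with high probability, and then control the remainder using the moment bound in Condition~\ref{asy}(1b). A second subtlety is that the $r$ splits share the same full-data path and are therefore not independent experiments; the ``independent $\chi^2$'' heuristic has to be replaced by a direct covariance calculation, which is exactly where the rate condition $r^{-1} n_c^{-2} n^2 \to 0$ buys the desired concentration.
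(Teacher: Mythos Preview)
Your proposal is correct and follows essentially the same route as the paper: the paper defers Theorem~\ref{con} to Lemma~\ref{classic}, whose proof Taylor-expands the validation loss around the restricted MLE, isolates the Wilks-type bias $A_{\alpha 1}$ for incorrect models and the $d_\alpha a(\phi)/n_c$ quadratic term (your ``variance penalty'') for correct ones, and then closes with Gaussian concentration plus a union bound over $N$ candidates under Condition~\ref{Ac}. The only cosmetic difference is that the paper centers its expansion at the whole-sample restricted MLE $\tbbeta_\alpha$ and invokes Wilks' theorem for $A_{\alpha 1}$, whereas you propose centering at the KL-projected parameter $\bbeta^*_\alpha$; both organizations yield the same leading-order separation.
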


In Theorem~\ref{con}, we do not explicitly specify the order of $p$ as a condition, however, the restriction on the dimensionality is implied by Conditions \ref{beta} and \ref{Ac}.  The ultra-high-dimensional setting where  $p = O(\exp(n^a))$, $0<a<1$, is allowed.  Theorem~\ref{con} can be easily derived from Lemma~3 in the Appendix following the fact that CV($n_v$) as described in Algorithm \ref{al:CCV} nails a potentially high-dimensional problem down to a low-dimensional  one.  With the selection consistency in hand, the use of unpenalized solution in S4 of  Algorithm~\ref{al:CCV} is applied to improve the estimation and  prediction performance.

\noindent {\bf 4. Numerical experiments}\\
In this section, we compare the proposed CV($n_v$) with several state-of-the-art tuning parameter selection methods including the $K$-fold CV ($K$-fold), $K$-fold CV with one standard error rule (1SE), AIC, BIC, and EBIC. We study both the linear regression and logistic regression with different correlation structures among covariates.

Before presenting the results of tuning parameter selection, we would like to first examine the behavior of the collections of solutions generated via different splits in a CV procedure. 

\noindent {\bf 4.1. Coherent Rate}

\begin{example}\label{lam-sync}\label{ex::CCV}
(i) Linear regression. For $i = 1, \cdots, n$, let $y_i = \boldsymbol{x}^{\top}_i\boldsymbol{\beta}^o + \varepsilon_i$, where $\boldsymbol{x}_i \stackrel{i.i.d.}{\sim} \mathcal{N}(\boldsymbol{0}_p, \Sigma)$ with $\boldsymbol{0}_p$ the length-$p$ vector with all 0 entries and $\Sigma_{j,k} = \rho^{|j-k|}$, $\varepsilon_i \stackrel{i.i.d.}{\sim} \mathcal{N}(0, 1)$, $\rho = 0.5$, $(n, p) = (500, 10000)$ and $\boldsymbol{\beta}^o \in \mathbb{R}^p$ with the first 9 coordinates (0.8, 0, 0.7, 0, 0.6, 0, 0.5, 0, 0.4) and 0 elsewhere. (ii) Logistic regression. For $i = 1, \cdots, n$, $y_i$ satisfies $\mathrm{pr}(y_i = 1) = \exp(\boldsymbol{x}_i^{\top}\boldsymbol{\beta}^o)/\{1+\exp(\boldsymbol{x}_i^{\top}\boldsymbol{\beta}^o)\} = 1 - \mathrm{pr}(y_i = 0)$, where $\boldsymbol{\beta}^o \in \mathbb{R}^p$ with the first 9 coordinates (1.6, 0, 1.4, 0, 1.2, 0, 1.0, 0, 0.8) and 0 elsewhere. The remaining part of the simulation setting is the same as in (i). 
\end{example}

Suppose the sequence of tuning parameters of the whole data set is $\blambda = (\lam_1,\cdots,\lam_R)$. Here, we take a variant of the 10-fold CV by repeatedly splitting the whole data $K = 100$ times into 9/10 fraction as construction set and the remaining 1/10 fraction as the validation set. Denote the collection of the validation sets as $\{s_k,\, k = 1, \cdots, K\}$ and the construction sets as $\{s_{(-k)},\, k=1,\cdots, K\}$. We also denote $s_0=\{1,\cdots,n\}$ to be the whole sample as a reference.  Denote by $\alpha^{(k)}_{r}$ the model of the $r$-th location in the collection of solutions constructed by subset $s_{(-k)}$ using its corresponding tuning parameter sequence $\blambda^{(-k)}$, where $r = 1,\cdots,R$, $k = 0, 1,\cdots, K$. We define the coherent rate as a sequence representing the degree of agreement of the models across different splits for each tuning parameter location,
\begin{align*}
\textsc{CR}(r) = \bigl| \{k=1,\cdots, K: \, \alpha^{(k)}_{r} = \alpha_{r}^{(0)}\}\bigr|/K, \quad r=1,\cdots,R.
\end{align*}
In the ideal case where $\textsc{CR}(r) = 1$, for all $r = 1, \cdots, R$, the CV method for choosing the tuning parameter may serve as a good surrogate for selecting the optimal model. However,  this is rarely true in practice, especially after the noise variables are activated in the estimators. Next, we demonstrate the behavior of coherent rate. 

For the setting in Example~\ref{ex::CCV}, we calculate the collection of solutions using the \pkg{R} package \pkg{glmnet} for Lasso, and the \pkg{R} package \pkg{ncvreg} for SCAD \citep{FanLi2001} and MCP \citep{Zhang2010}. Figure~\ref{CR} shows how the coherent rate changes along the path in different scenarios. In addition, we mark the location of the 10-fold CV chosen estimator and the first location where noise variables are selected. It is obvious that the coherent rate is much smaller than 1 at most locations.  Note that there exists a small segment where the coherent rate equals 1 for all penalties in linear regression; also, this segment is longer for SCAD and MCP than Lasso. Only on the corresponding segment, the CV is averaging over the same model across different splits. Unfortunately, the 10-fold CV always select a model with coherent rate equals 0 (marked by the solid vertical line in Figure \ref{CR}).   In the logistic regression, all penalties lead to a very small coherent rate even before the noise variables are selected.

From the path-generating procedure, estimators and tuning parameters can be regarded as functions of each other, given the data, so the phenomenon we noticed above is due to the data-driven property of tuning parameters selection. When the data are changed from the whole sample to different subsample splits, the tuning parameter sequence is usually different, and naturally leads to  possibly distinct models. If one wants to hold the models the same, very stringent conditions need to be imposed on the design matrix; these are usually not satisfied even for the simple simulation settings we have shown.

\begin{figure}[htbp]
\caption{The coherent rate along the path for Lasso, SCAD and MCP penalized linear and logistic regression estimators in Example \ref{ex::CCV}. $x$-axis: the location in the collection of solutions, $y$-axis: the coherent rate.  The solid line ``------" is where 10-fold CV chooses, and the dotted line ``- - -" is where noises start to be selected.
\label{CR}}
\includegraphics[scale=0.35]{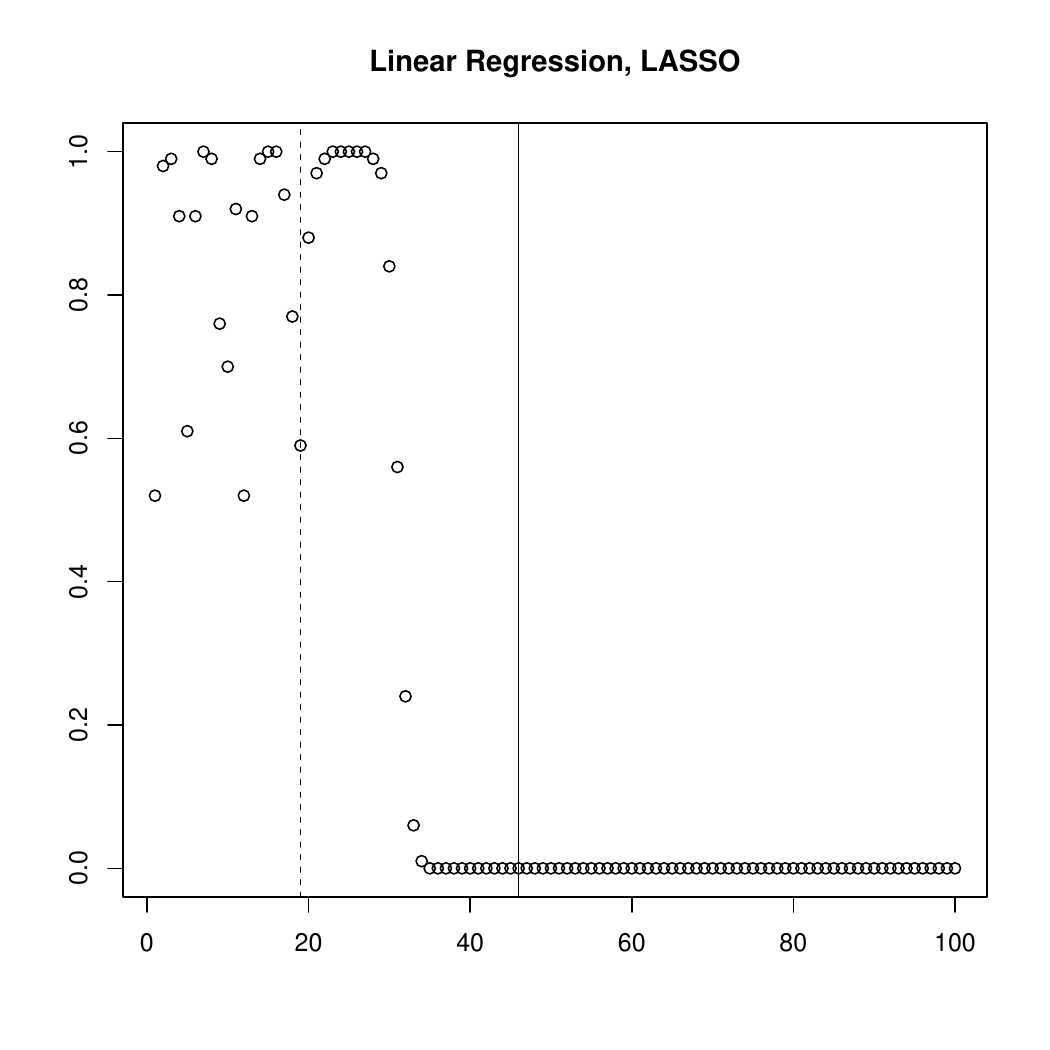}
\includegraphics[scale=0.35]{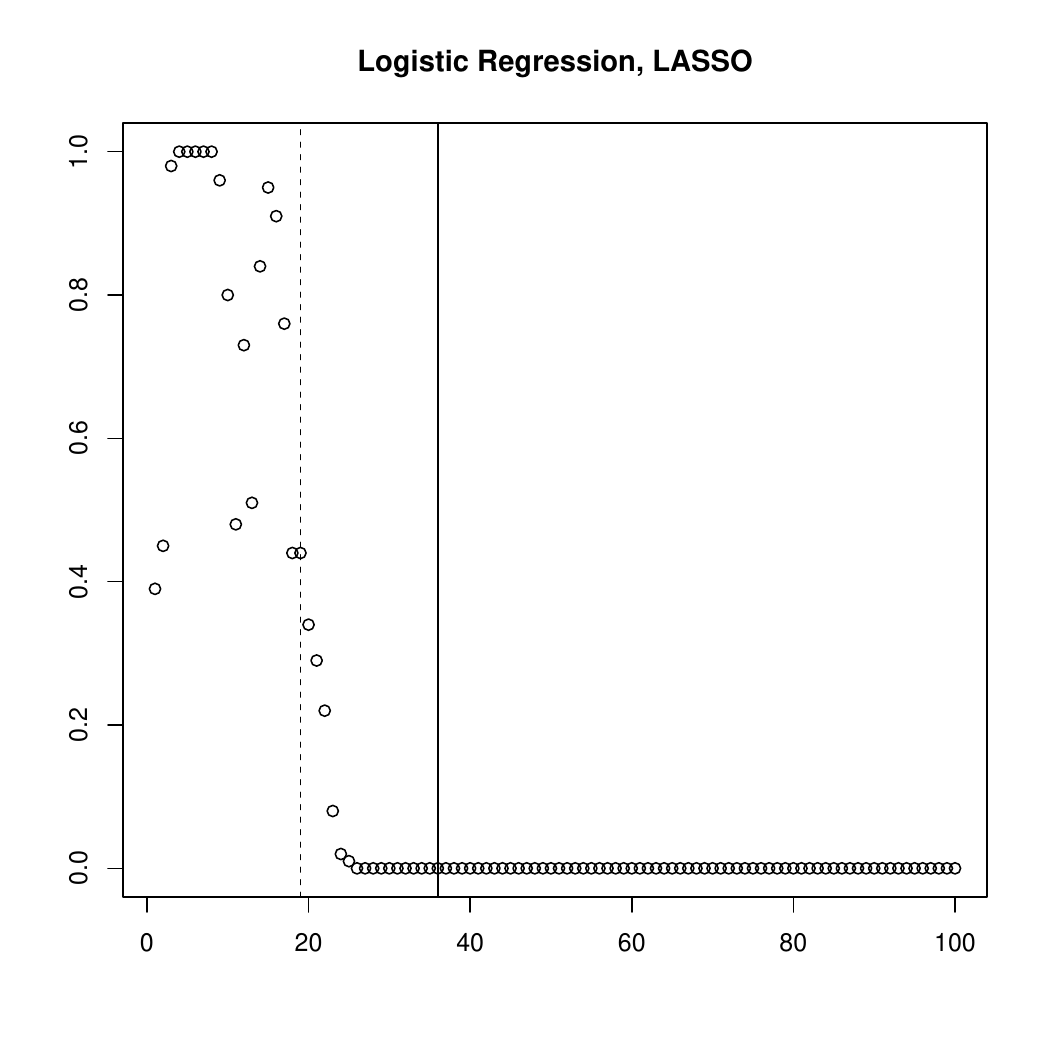}

\includegraphics[scale=0.35]{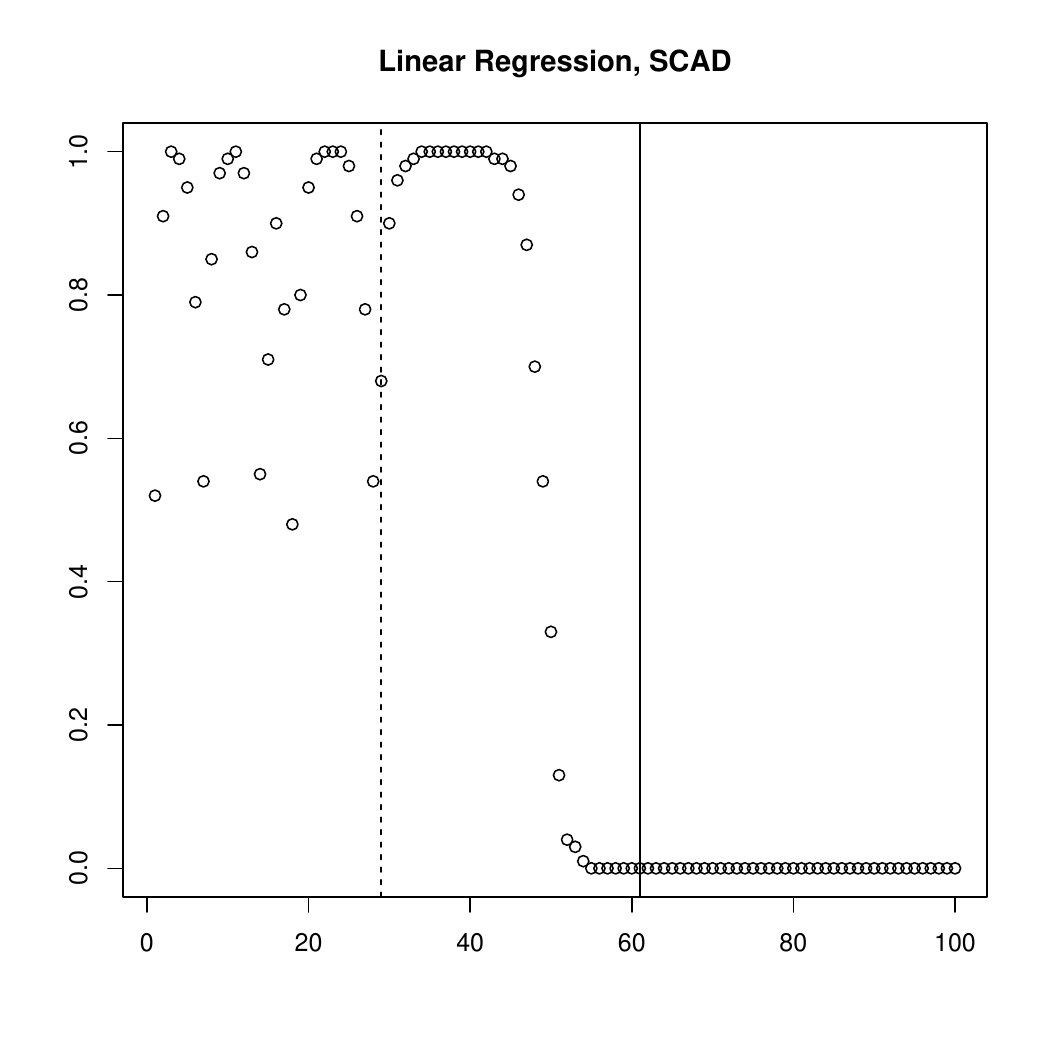}
\includegraphics[scale=0.35]{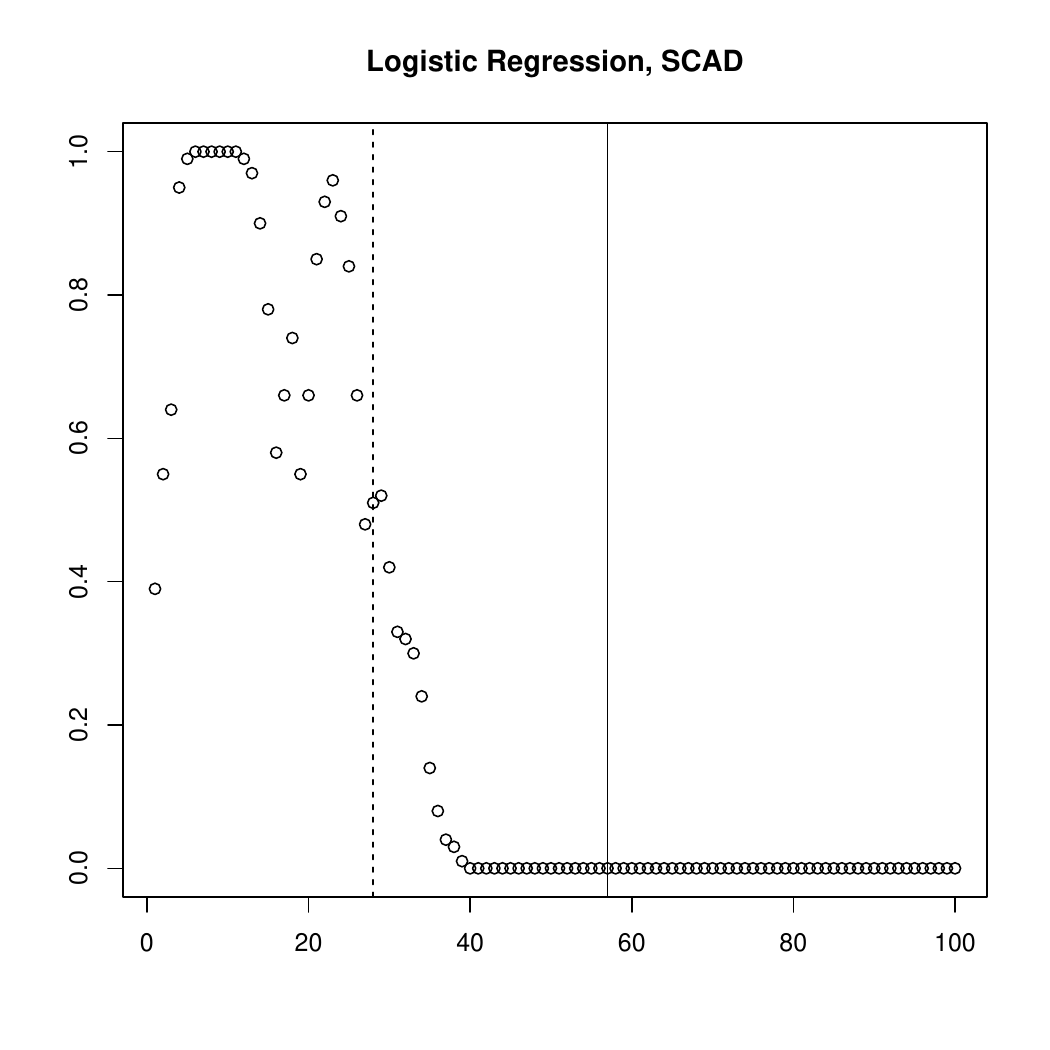}

\includegraphics[scale=0.35]{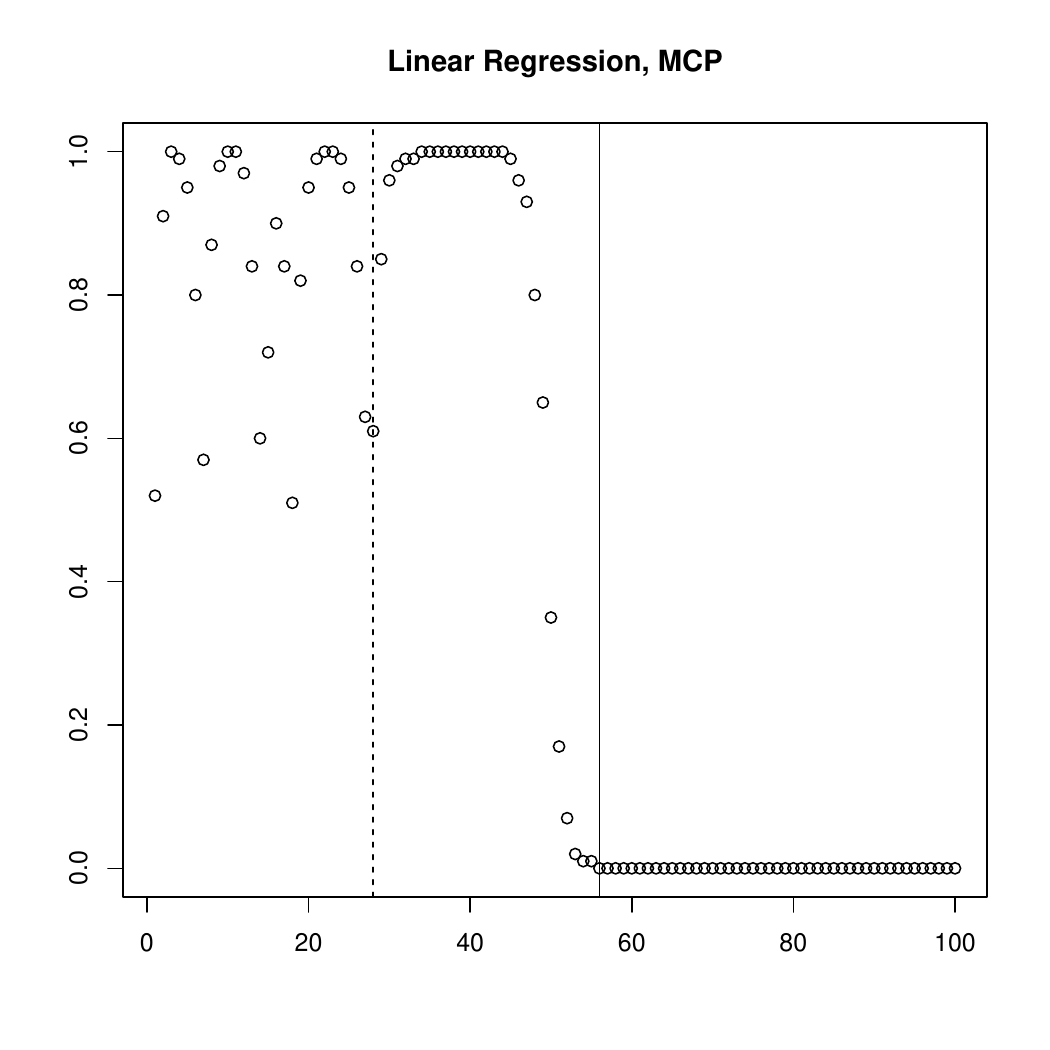}
\includegraphics[scale=0.35]{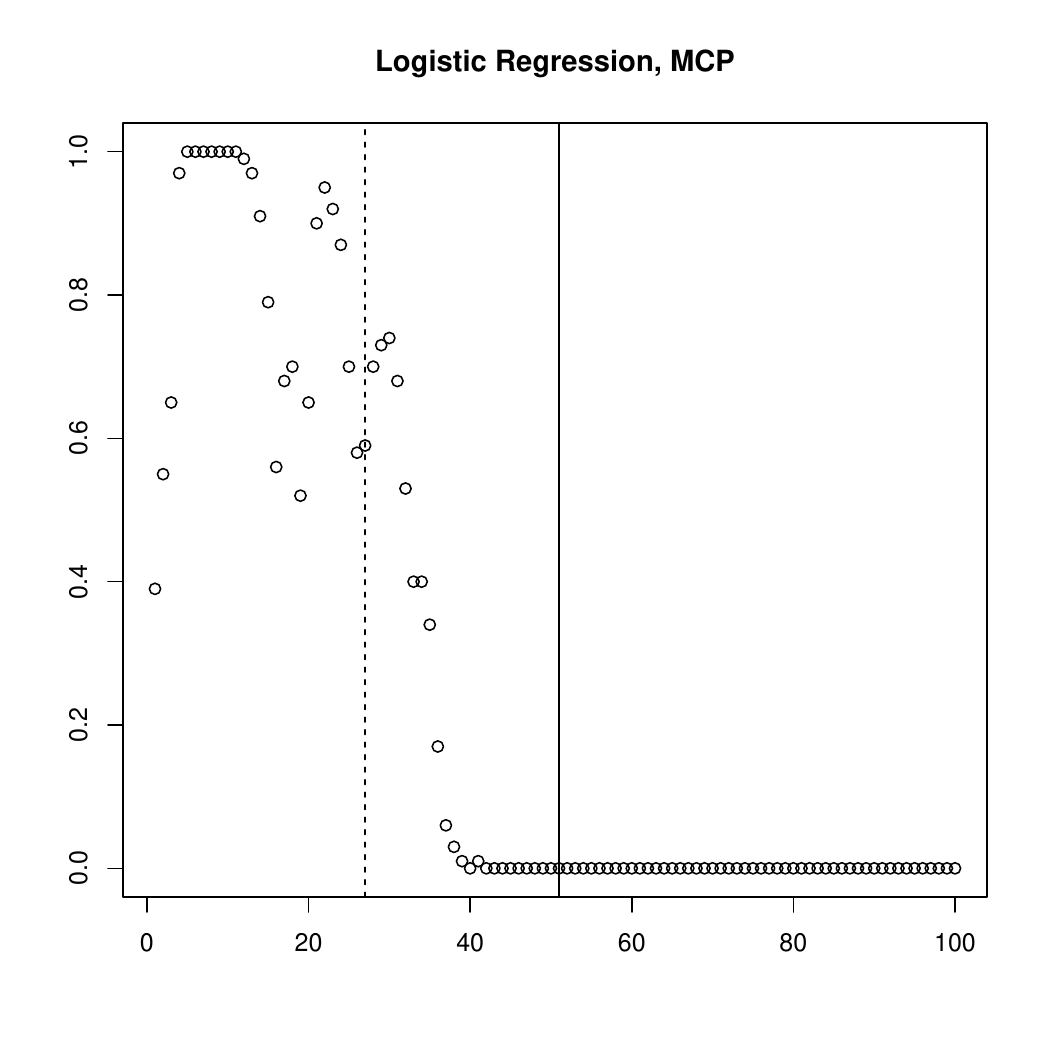}
\end{figure}


\noindent {\bf 4.2. Linear regression}\\
For linear regression, we use the same setting as Example \ref{lam-sync} (i) with $\rho = 0$ and $\rho = 0.5$, and repeat the simulation 100 times. Here the signal to noise ratio (SNR) for the two settings are 1.9 and 4, respectively. 
 For SCAD and MCP paths, we use the default $\gamma = 3$ in the \pkg{ncvreg} package. In Table \ref{T1}, for CV($n_v$), we set $n_c = \lceil n^{1/2} \rceil = 23$ and $n_v = n - n_c = 477$. We compare our results with 10-fold CV in \pkg{glmnet} and \pkg{ncvreg}. We also include the comparison with 10-fold CV with 1SE, where the $\lambda$ is chosen as the maximum one with a loss function value less than the minimum loss function value plus its standard error. In addition, we report the performances of popular information criteria including AIC, BIC and EBIC. To compare these methods, we report false negative (FN), false positive (FP) and prediction error (PE) evaluated on an independent test data set of size $n$. 

\begin{table}[t]
\caption{Comparisons for Example 1(i) with $\rho = 0$ and $\rho = 0.5$ cases. Results are reported in the form of mean (standard error). For CV($n_v$), $n_c  = \lceil n^{1/2} \rceil$ and $K = 50$; FP, false positive; FN, false negative;  PE, prediction error.}
\label{T1}\par
\vskip .2cm
\centerline{\tabcolsep=5truept
\begin{tabular}{|lrrrrrr|}\hline
Method & \multicolumn{3}{c}{$\rho = 0$}& \multicolumn{3}{c|}{$\rho = 0.5$} \\ \hline
Lasso &  FP & FN & PE  & FP & FN & PE \\ \hline
CV($n_v$)&0.01(0.01)&0.00(0.00)&1.01(0.01)&0.07(0.03)&0.04(0.02)&1.02(0.01)\\
10-fold&48.39(3.99)&0.00(0.00)&1.12(0.01)&30.72(3.04)&0.00(0.00)&1.09(0.01)\\
1SE&3.31(0.81)&0.00(0.00)&1.19(0.01)&1.51(0.37)&0.00(0.00)&1.16(0.01)\\
AIC&497.32(1.23)&0.00(0.00)&1.38(0.01)&471.54(1.36)&0.00(0.00)&1.37(0.01)\\
BIC&2.04(0.21)&0.00(0.00)&1.16(0.01)&1.75(0.15)&0.00(0.00)&1.12(0.01)\\
EBIC&0.58(0.08)&0.00(0.00)&1.18(0.01)&0.90(0.09)&0.00(0.00)&1.13(0.01)\\
\hline
SCAD & FP & FN & PE  & FP & FN & PE \\ \hline
CV($n_v$)&0.02(0.01)&0.00(0.00)&1.01(0.01)&0.05(0.02)&0.00(0.00)&1.01(0.01)\\
10-fold&24.50(2.80)&0.00(0.00)&1.03(0.01)&21.74(2.37)&0.00(0.00)&1.03(0.01)\\
1SE&0.48(0.12)&0.00(0.00)&1.08(0.01)&0.21(0.05)&0.00(0.00)&1.08(0.01)\\
AIC&42.19(2.60)&0.00(0.00)&1.03(0.01)&27.02(1.89)&0.04(0.02)&1.07(0.02)\\
BIC&0.94(0.11)&0.00(0.00)&1.04(0.01)&0.77(0.11)&0.04(0.02)&1.08(0.02)\\
EBIC&0.30(0.05)&0.00(0.00)&1.05(0.01)&0.22(0.05)&0.04(0.02)&1.09(0.02)\\
\hline
MCP &  FP & FN & PE  & FP & FN & PE \\ \hline
CV($n_v$)&0.04(0.02)&0.00(0.00)&1.01(0.01)&0.06(0.02)&0.01(0.01)&1.01(0.01)\\
10-fold&4.87(0.66)&0.00(0.00)&1.02(0.01)&5.25(0.66)&0.00(0.00)&1.02(0.01)\\
1SE&0.01(0.01)&0.00(0.00)&1.07(0.01)&0.02(0.02)&0.01(0.01)&1.07(0.01)\\
AIC&87.14(0.45)&0.00(0.00)&1.18(0.01)&80.23(0.75)&0.00(0.00)&1.16(0.01)\\
BIC&1.20(0.90)&0.00(0.00)&1.02(0.01)&1.43(0.94)&0.00(0.00)&1.02(0.01)\\
EBIC&0.05(0.02)&0.00(0.00)&1.02(0.01)&0.06(0.02)&0.00(0.00)&1.02(0.01)\\
\hline
\end{tabular}
}
\end{table}

\begin{table}[!h]
\caption{Comparisons for Example 2 with $\rho = 0$ and $\rho = 0.5$ cases. Results are reported in the form of mean (standard error). For CV($n_v$), $n_c  = \lceil n^{1/2} \rceil$ and $K = 50$; FP, false positive; FN, false negative;  PE, prediction error. \label{tb:ex2}}
\begin{center}
\begin{tabular}{|lrrrrrr|}
\hline
Method & \multicolumn{3}{c}{$\rho = 0$}& \multicolumn{3}{c|}{$\rho = 0.5$} \\ \hline
Lasso &  FP & FN & PE  & FP & FN & PE \\ \hline
CV($n_v$)&0.02(0.01)&0.01(0.01)&1.01(0.01)&0.03(0.02)&0.07(0.03)&1.02(0.01)\\
10-fold&73.56(5.13)&0.00(0.00)&1.17(0.01)&32.47(3.33)&0.00(0.00)&1.09(0.01)\\
1SE&7.44(0.83)&0.00(0.00)&1.23(0.01)&0.99(0.36)&0.00(0.00)&1.15(0.01)\\
AIC&484.59(1.39)&0.00(0.00)&1.41(0.01)&402.84(1.19)&0.00(0.00)&1.31(0.01)\\
BIC&3.41(0.26)&0.00(0.00)&1.24(0.01)&1.10(0.14)&0.00(0.00)&1.11(0.01)\\
EBIC&0.76(0.10)&0.00(0.00)&1.28(0.01)&0.26(0.05)&0.00(0.00)&1.12(0.01)\\
\hline
SCAD & FP & FN & PE  & FP & FN & PE \\ \hline
CV($n_v$)&0.01(0.01)&0.01(0.01)&1.01(0.01)&0.04(0.02)&0.07(0.03)&1.02(0.01)\\
10-fold&19.80(2.35)&0.00(0.00)&1.02(0.01)&22.99(1.78)&0.00(0.00)&1.03(0.01)\\
1SE&0.20(0.06)&0.00(0.00)&1.08(0.01)&1.13(0.23)&0.02(0.01)&1.07(0.01)\\
AIC&214.58(1.31)&0.00(0.00)&1.49(0.02)&34.73(2.48)&0.00(0.00)&1.03(0.01)\\
BIC&0.82(0.11)&0.00(0.00)&1.04(0.01)&1.05(0.16)&0.01(0.01)&1.05(0.01)\\
EBIC&0.19(0.04)&0.00(0.00)&1.04(0.01)&0.30(0.06)&0.02(0.01)&1.06(0.01)\\

\hline
MCP &  FP & FN & PE  & FP & FN & PE \\ \hline

CV($n_v$)&0.01(0.01)&0.01(0.01)&1.01(0.01)&0.04(0.02)&0.07(0.03)&1.02(0.01)\\
10-fold&6.46(1.01)&0.00(0.00)&1.02(0.01)&7.16(0.76)&0.00(0.00)&1.03(0.01)\\
1SE&0.01(0.01)&0.00(0.00)&1.07(0.01)&0.05(0.03)&0.06(0.02)&1.07(0.01)\\
AIC&102.30(0.48)&0.00(0.00)&1.81(0.02)&46.67(1.04)&0.00(0.00)&1.06(0.01)\\
BIC&100.72(1.13)&0.00(0.00)&1.80(0.02)&0.66(0.19)&0.01(0.01)&1.03(0.01)\\
EBIC&0.06(0.03)&0.00(0.00)&1.02(0.01)&0.07(0.03)&0.02(0.01)&1.03(0.01)\\
\hline
\end{tabular}
\end{center}
\end{table}

In Table \ref{T1}, for Lasso penalty, AIC and 10-fold CV have the largest mean FP followed by BIC, 1SE and EBIC. CV($n_v$) performs the best in terms of FP, FN as well as PE in both $\rho=0$ and $\rho=0.5$ cases. 

SCAD and MCP lead to similar performances to Lasso based methods.  Their FP of 10-fold CV are not as large as those of Lasso, but CV($n_v$) still outperforms 10-fold CV and AIC	 in terms of both variable selection and prediction. It is worth to point out that the difference is not as significant as that in the Lasso case, possibly due to the asymptotic unbiasedness property of SCAD and MCP \citep{Zhang2010}.  It is similar that when using BIC and EBIC, SCAD and MCP perform better than Lasso. 

We  also present the comparisons of the $\lambda$ value derived from the universal thresholding \citep{DonohoJohnstone1994} $\lambda_{\mathrm{univ}} = \sigma \sqrt{2(\log p) /n}$ with $\sigma$ being the error standard deviation and the $\lambda$ values from different methods  in Table~\ref{T-compare} under the uncorrelated design ($\rho = 0$). The rationale of universal thresholding is a theoretical upper bound of the maximum of all the errors; hence it can be regarded as a theoretical lower bound of $\lambda$ to remove all the noise variables.  We observe from the table that only CV($n_v$) gives a $\lambda$ value larger than $\lambda_{\mathrm{univ}}$. On the other hand, note that the lowest signal level of this example is 0.4, which can serve as an upper bound of $\lambda$ in order to retain all the important variables.   This analysis provides an explanation to the great performance of CV($n_v$). 

We consider an additional simulation setting described in the following example.
\begin{example}\label{ex::CCV2} Linear regression. For $i = 1, \cdots, n$, let $y_i = \boldsymbol{x}^{\top}_i\boldsymbol{\beta}^o + \varepsilon_i$, where $\boldsymbol{x}_i \stackrel{i.i.d.}{\sim} \mathcal{N}(\boldsymbol{0}_p, \Sigma)$ with $\boldsymbol{0}_p$ the length-$p$ vector with all 0 entries and $\Sigma_{j,k} = \rho^{|j-k|}$, $\varepsilon_i \stackrel{i.i.d.}{\sim} \mathcal{N}(0, 1)$, $\rho = 0$ or $0.5$, $(n, p) = (500, 10000)$ and $\boldsymbol{\beta}^o \in \mathbb{R}^p$ with the first 7 coordinates (1, 0.9, 0.8, 0.7, 0.6, 0.5, 0.4) and 0 elsewhere. 
\end{example}
Note that this is a more challenging scenario compared with Example 1(i) since there are more signal variables and the correlation among signal variables is stronger when $\rho=0.5$. 
The corresponding results for Example \ref{ex::CCV2} are available in Table \ref{tb:ex2}. For $\rho=0$ with Lasso penalty, CV($n_v$) performs significantly better than all the competing methods in terms of both FP and PE. We observe similar conclusions for other settings. 


\begin{table}
\caption{Comparison of $\lambda$ values derived from various methods for Example 2 (i) with $\rho=0$. Results are presented in the form of mean (standard error).}
\label{T-compare} \par
\vskip .2cm
\centerline{\tabcolsep=5truept\begin{tabular}{|ccccccc|} \hline
 Universal & CV($n_v$) & 10-fold & 1SE & AIC &BIC&EBIC \\ \hline
0.19&0.20(0.02)&0.12(0.02)&0.18(0.02)&0.01(0.00)&0.17(0.01)&0.18(0.01)\\\hline
\end{tabular}}
\end{table}

\noindent {\bf 4.3. Logistic regression}\\
For logistic regression, we use the  setting in Example~\ref{lam-sync} (ii) with $\rho = 0$ and $\rho = 0.5$, and repeat the simulation 100 times. In Table~\ref{T2}, for CV($n_v$), we set $n_c = \lceil n^{3/4} \rceil = 106$ following the results in Section 3.2. Different from the linear case, instead of reporting PE, we report classification error (CE),  which is defined as the average classification error evaluated at an independent test data set of size $n$. The remaining settings and packages used are the same as in the linear regression case. 

\begin{table}[t]

\caption{Comparison in Logistic regression with $\rho = 0$ and $\rho = 0.5$ cases. Results are reported in the form of mean (standard error). For CV($n_v$), $n_c  = \lceil n^{3/4} \rceil$ and $K = 50$;  FN, false negative; FP, false positive; CE, classification error.}
\label{T2}\par
\vskip .2cm
\centerline{\tabcolsep=5truept
\begin{tabular}{|lrrrrrr|}\hline
Method & \multicolumn{3}{c}{$\rho = 0$}& \multicolumn{3}{c|}{$\rho = 0.5$} \\ \hline
Lasso &  FP & FN & CE(\%)  & FP & FN & CE(\%) \\ \hline
CV($n_v$)&1.63(0.14)&0.01(0.01)&19.34(0.20)&0.92(0.10)&0.10(0.03)&16.06(0.20)\\
10-fold&95.86(4.60)&0.00(0.00)&20.81(0.23)&87.76(4.28)&0.01(0.01)&17.22(0.20)\\
1SE&21.48(2.05)&0.00(0.00)&19.44(0.20)&15.60(1.67)&0.03(0.02)&16.19(0.18)\\
AIC&21.63(1.34)&0.00(0.00)&19.50(0.20)&20.01(1.56)&0.02(0.01)&16.28(0.19)\\
BIC&1.88(0.14)&0.08(0.03)&19.49(0.19)&1.75(0.16)&0.05(0.02)&16.14(0.18)\\
EBIC&0.46(0.07)&0.16(0.04)&19.72(0.20)&0.60(0.08)&0.11(0.03)&16.25(0.18)\\
\hline
SCAD & FP & FN & CE(\%)   & FP & FN & CE(\%)  \\ \hline
CV($n_v$)&1.84(0.13)&0.02(0.01)&19.48(0.22)&1.17(0.12)&0.08(0.03)&16.21(0.19)\\
10-fold&55.05(2.06)&0.00(0.00)&19.22(0.20)&52.40(1.93)&0.02(0.01)&16.72(0.19)\\
1SE&10.88(0.76)&0.00(0.00)&19.34(0.19)&8.29(0.70)&0.03(0.02)&16.40(0.18)\\
AIC&31.24(1.25)&0.00(0.00)&19.20(0.20)&23.84(1.34)&0.06(0.02)&16.48(0.19)\\
BIC&3.23(0.26)&0.03(0.02)&19.61(0.19)&2.17(0.20)&0.08(0.03)&16.41(0.19)\\
EBIC&0.92(0.10)&0.11(0.03)&19.90(0.19)&0.77(0.08)&0.10(0.03)&16.54(0.19)\\
\hline
MCP &  FP & FN & CE(\%)  & FP & FN & CE(\%) \\ \hline
CV($n_v$)&2.08(0.12)&0.02(0.01)&19.76(0.20)&1.36(0.10)&0.06(0.02)&16.60(0.19)\\
10-fold&13.10(0.79)&0.00(0.00)&18.84(0.21)&13.31(0.87)&0.03(0.02)&16.23(0.19)\\
1SE&0.91(0.14)&0.06(0.02)&19.13(0.19)&1.09(0.22)&0.10(0.03)&16.25(0.19)\\
AIC&19.38(1.08)&0.00(0.00)&18.92(0.20)&33.39(1.02)&0.03(0.02)&16.86(0.20)\\
BIC&2.51(0.23)&0.03(0.02)&18.88(0.20)&2.16(0.27)&0.08(0.03)&16.06(0.18)\\
EBIC&0.49(0.07)&0.07(0.03)&18.97(0.19)&0.32(0.06)&0.13(0.03)&16.16(0.18)\\
\hline
\end{tabular}
}\end{table}

In Table~\ref{T2}, CV($n_v$) significantly outperforms 10-fold CV and  AIC in terms of FP. The difference is more significant than that in the linear regression case, when SCAD or MCP is used. For Lasso penalized logistic regression, 1SE has a significant number of FP, compared with the linear regression case. EBIC tends to work much better than AIC and BIC, with a similar performance as CV($n_v$) when SCAD and MCP are applied.  When evaluated by the CE, CV($n_v$) still performs the best in most scenarios.

\noindent {\bf 5. Data Analysis}

We now illustrate two applications of the proposed method via eye disease gene expression data \citep{Scheetz2006} and leukemia data set \citep{Golub}. 

In the eye disease gene expression data set, for harvesting of tissue from the eyes and subsequent microarray analysis, 120 12-week-old male rats were selected. The microarrays used to analyze the RNA from the eyes of these animals contain more than 31,042 different probe sets (Affymetric GeneChip Rat Genome 230 2.0 Array). The intensity values were normalized using the robust multichip averaging method \citep{Irizarry2003} to obtain summary expression values for each probe set. Gene expression levels were analyzed on a logarithmic scale.

Following \cite{HuangEtal2010} and \cite{FanEtal2011}, we are interested in finding the genes that are related to the TRIM32 gene, which was recently found to cause Bardet--Biedl syndrome \citep{Chiang2006} and is a genetically heterogeneous disease of multiple organ systems, including the retina. Although more than 30,000 probe sets are represented on the Rat Genome 230 2.0 Array, many of these are not expressed in the eye tissue. We focus only on the 18,975 probes that are expressed in the eye tissue.

The leukemia data set we studied was previously analyzed in \cite{Golub}. There are $p = 7,129$ genes and $n = 72$ samples coming from two classes: 47 in class ALL (acute lymphocytic leukemia) and 25 in class AML (acute myelogenous leukemia).  

We model these two problems using linear and logistic regression, respectively.  In the eye gene expression data set, we randomly draw without replacement 100 out of 120 observations from the sample, using them as training data, and use the remaining sub-sample of size 20 as the test data.  In the leukemia data set, we randomly draw without replacement  60 out of 72 observations from the sample as the training data with the remaining observations as the test data. 

We repeat this procedure 100 times with the results reported in Tables  \ref{tb::rat} and \ref{tb::leukemia} in the form of mean (standard error). For each split, we use \pkg{glmnet} and \pkg{ncvreg} to compute the Lasso and SCAD / MCP collections of solutions, respectively; we then compare our proposed CV($n_v$) with the 10-fold CV, which is the default tuning parameter selection method in \pkg{glmnet} and \pkg{ncvreg}. In addition, we investigate the performance of 1SE, AIC,  BIC and  EBIC.

\begin{table}[t]
\caption{Model size and prediction error for the Eye Disease Gene Expression data sets. Results are reported in the form of mean (standard error). \label{tb::rat}}
\begin{center}
\begin{tabular}{|l|rr|rr|rr|}
\hline
&\multicolumn{2}{c|}{Lasso}&\multicolumn{2}{c|}{SCAD}&\multicolumn{2}{c|}{MCP}\\\hline 
Method&Size&PE&Size&PE&Size&PE\\
\hline
CV($n_v$)&2.46(0.08)&0.01(0.00)&2.23(0.07)&0.01(0.00)&2.36(0.07)&0.01(0.00)\\
10-fold&61.18(1.68)&0.01(0.00)&33.54(0.59)&0.01(0.00)&11.12(0.30)&0.01(0.00)\\
1SE&31.03(1.16)&0.01(0.00)&24.84(0.71)&0.01(0.00)&5.39(0.31)&0.01(0.00)\\
AIC&103.02(0.48)&0.01(0.00)&0.37(0.05)&0.02(0.00)&5.38(0.25)&0.01(0.00)\\
BIC&99.99(0.71)&0.01(0.00)&0.17(0.04)&0.02(0.00)&4.65(0.25)&0.01(0.00)\\
EBIC&1.03(0.24)&0.02(0.00)&0.02(0.01)&0.02(0.00)&1.90(0.13)&0.01(0.00)\\
\hline
\end{tabular}
\end{center}
\end{table}

\begin{table}[t]
\caption{Model size and test classification error for the Leukemia data sets. Results are reported in the form of mean (standard error). \label{tb::leukemia}}
\begin{center}
\begin{tabular}{|l|rr|rr|rr|}
\hline
&\multicolumn{2}{c|}{LASSO}&\multicolumn{2}{c|}{SCAD}&\multicolumn{2}{c|}{MCP}\\\hline 
Method&Size&CE(\%)&Size&CE(\%)&Size&CE(\%)\\
\hline
CV($n_v$)&8.93(0.58)&8.07(0.75)&10.14(0.57)&7.80(0.79)&5.62(0.14)&8.33(0.76)\\
10-fold&21.52(0.41)&5.85(0.74)&17.04(0.31)&7.45(0.75)&5.07(0.14)&9.84(0.94)\\
1SE&12.51(0.46)&8.87(0.93)&11.70(0.43)&9.84(0.92)&2.85(0.16)&14.54(1.25)\\
AIC&16.17(0.39)&6.65(0.76)&1.00(0.05)&30.59(1.37)&4.14(0.17)&10.02(0.98)\\
BIC&4.32(0.30)&17.20(1.14)&0.91(0.06)&30.59(1.37)&3.56(0.15)&10.37(1.01)\\
EBIC&0.48(0.06)&31.29(1.33)&0.37(0.05)&31.56(1.32)&1.46(0.08)&14.72(1.03)\\
\hline
\end{tabular}
\end{center}
\end{table}

For the eye disease gene expression data sets, Table \ref{tb::rat} shows that CV($n_v$) performs well compared with 10-fold CV as well as all the considered information type criteria. In terms of model size, EBIC leads to the smallest model on average when using the Lasso penalty.  It, however, probably misses some important predictors as the prediction error is larger than those of the other methods. Among the models that give the best prediction error, CV($n_v$) always selects the sparsest model. Similar behaviors can be found in SCAD and MCP, though the differences in performances are not as pronounced. 

For the leukemia data set, we can see from Table \ref{tb::leukemia} that both BIC and EBIC select very small models with a  large test classification error for all three penalties. CV($n_v$) tends to provide a reasonably good balance for the complexity of the model and the test classification error. Although 10-fold CV has a smaller test classification error for Lasso and SCAD, it selects many more variables on average.

\noindent {\bf 6. Discussion}

In this paper we study CV methods applied to the tuning parameter selection problem in high-dimensional penalized generalized linear models.  For the $K$-fold CV, we show the issue of the mis-alignment for different splits is one possible reason of over-selection. We advocate the use of CV($n_v$) with a proper choice of $n_v$ for the path algorithms, which has been shown to be restricted model selection consistent in high-dimensional settings.

One possible future direction is to study the theoretical implication of low coherent rate of CV, as demonstrated in the numerical results,  on the model selection performance. The proposed algorithm is a general framework, which could be extended to the case of using other methods (e.g., forward regression) to generate the collection of solutions. It is also interesting to extend the methodology and the associated theory  to other models including additive models, Cox proportional hazards models, among others.  In addition, we are interested in selecting the concavity parameter $\gamma$ in folded-concave penalties via cross-validation.

An implementation of the CV($n_v$) method for high-dimensional variable selection is available at \url{https://github.com/statcodes/rccv}.

\vskip 14pt
\noindent {\large\bf Supplementary Materials}

The online supplementary materials include all the technical details and additional simulation results.

\vskip 14pt
\noindent {\large\bf Acknowledgements}

The authors would like to thank the co-Editor Professor Hsin-Cheng Huang, the AE and three referees for their insightful comments, which have greatly improved the scope and quality of the paper. 
This research was partially supported by NSF CAREER grant DMS-1554804.

\bibliographystyle{ims}
\bibliography{GLMCV}


\vskip .65cm
\noindent
Department of Statistics, Columbia University
\vskip 2pt
\noindent
E-mail: yang.feng@columbia.edu
\vskip 2pt

\noindent
School of Mathematics, University of Bristol
\vskip 2pt
\noindent
E-mail: y.yu@bristol.ac.uk

\newpage
\fontsize{10.95}{14pt plus.8pt minus .6pt}\selectfont
\vspace{0.8pc}
\centerline{\large\bf Supplementary materials for ``The restricted consistency }
\vspace{2pt}
\centerline{\large\bf  property of leave-$n_v$-out cross-validation for }
\vspace{2pt}
\centerline{\large\bf   high-dimensional variable selection"}
\vspace{.4cm}
\centerline{Yang Feng and Yi Yu}
\vspace{.4cm}
\centerline{\it Columbia University and University of Bristol}
\vspace{.55cm}
\fontsize{9}{11.5pt plus.8pt minus .6pt}\selectfont


The supplementary material includes all the technical details and additional simulation results.

\noindent {\bf A.1 Additional lemmas and proofs}

The following Lemma is adapted from \cite{Lalley2013}.  It helps us to develop the asymptotic theory where $N$, the size of the candidate models, is allowed to diverge with the sample size. 

\begin{lemma}[Gaussian concentration]\label{lemma-tala}
Let $\gamma$ be the standard Gaussian probability measure on $\mathbb{R}^n$ (that is, the distribution of a $\mathcal{N}(0, I_n)$ random vector), and let $F: \mathbb{R}^n \to \mathbb{R}$ be Lipschitz in each variable separately relative to the Euclidean metric, with Lipschitz constant $c$.  Then for every $t > 0$,
\begin{align*}
\gamma\{|F - E_{\gamma}(F)| \ge t\} \le 2\exp\left(-\frac{t^2}{c^2\pi^2}\right).
\end{align*}
\end{lemma}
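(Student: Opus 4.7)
\subsection*{Proof proposal for Lemma \ref{lemma-tala}}

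My plan is to prove the stated Gaussian concentration inequality via the classical Maurey--Pisier rotation (interpolation) trick together with a Chernoff argument. The strategy has two parts: (i) establish a subgaussian bound on the moment generating function of $F(X)-E_{\gamma}F$, and (ii) optimize over the Chernoff parameter to recover the tail estimate.

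First I would reduce to the case where $F$ is smooth and centered. By subtracting $E_{\gamma}F$ we may assume $E_{\gamma}F=0$, and by a standard mollification/density argument (convolving $F$ with a smooth Gaussian kernel preserves the coordinatewise Lipschitz constant $c$) it suffices to prove the bound for $F$ smooth with each partial derivative bounded by $c$, so that the Euclidean gradient satisfies the pointwise control that the rotation trick requires. Next I would introduce an independent copy $Y\sim\gamma$ of $X\sim\gamma$ and perform symmetrization: by Jensen's inequality applied to the convex function $u\mapsto e^{\lambda u}$,
\begin{equation*}
E_{\gamma}\exp\!\bigl(\lambda(F(X)-E_{\gamma}F)\bigr)\;=\;E_X\exp\!\bigl(\lambda(F(X)-E_Y F(Y))\bigr)\;\le\;E_{X,Y}\exp\!\bigl(\lambda(F(X)-F(Y))\bigr).
\end{equation*}

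Now comes the main step, the rotation identity. Set $Y_\theta=X\cos\theta+Y\sin\theta$ and $Y'_\theta=-X\sin\theta+Y\cos\theta$ for $\theta\in[0,\pi/2]$, so that $Y_0=X$ and $Y_{\pi/2}=Y$. Differentiating under the integral sign yields
\begin{equation*}
F(Y)-F(X)\;=\;\int_0^{\pi/2}\nabla F(Y_\theta)\cdot Y'_\theta\,d\theta.
\end{equation*}
Applying Jensen's inequality with the uniform probability measure $(2/\pi)\,d\theta$ on $[0,\pi/2]$ to the convex function $u\mapsto e^{\lambda u}$ gives
\begin{equation*}
E\exp\!\bigl(\lambda(F(Y)-F(X))\bigr)\;\le\;\frac{2}{\pi}\int_0^{\pi/2}E\exp\!\Bigl(\tfrac{\lambda\pi}{2}\nabla F(Y_\theta)\cdot Y'_\theta\Bigr)\,d\theta.
\end{equation*}
The key observation is that for every fixed $\theta$, the pair $(Y_\theta,Y'_\theta)$ is an orthogonal transformation of $(X,Y)$, so it has the same joint law as $(X,Y)$; in particular $Y_\theta$ and $Y'_\theta$ are independent standard Gaussians on $\mathbb{R}^n$. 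Conditioning on $Y_\theta$ turns $\nabla F(Y_\theta)\cdot Y'_\theta$ into a Gaussian linear combination with conditional variance $\|\nabla F(Y_\theta)\|^2$; computing the Gaussian MGF and then bounding by the hypothesis $\|\nabla F\|^2\le c^2$ (which is where the coordinatewise Lipschitz constant enters) yields an inner bound of $\exp(\lambda^2\pi^2 c^2/8)$ that is uniform in $\theta$.

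Combining the symmetrization step with the rotation bound gives $E\exp(\lambda(F-E_{\gamma}F))\le\exp(\lambda^2 c^2\pi^2/8)$, a subgaussian moment generating function estimate. Applying Chernoff's inequality, $P\{F-E_{\gamma}F\ge t\}\le\inf_{\lambda>0}\exp(-\lambda t+\lambda^2 c^2\pi^2/8)$, and optimizing in $\lambda$ produces a bound of the form $\exp(-t^2/(c^2\pi^2))$ up to the stated numerical constant; the same bound applies to $-F$, and a union bound then delivers the factor of $2$. The step I expect to require the most care is the interpretation of ``Lipschitz in each variable separately, relative to the Euclidean metric, with constant $c$'': the rotation trick needs the Euclidean gradient norm bounded by $c$ (which is what yields the dimension-free $\pi^2$ constant rather than a dimension-dependent bound from a worst-case coordinate-by-coordinate estimate), so I would verify that this is indeed the intended reading in the source and then track the constants through the MGF and Chernoff steps to recover precisely $2\exp(-t^2/(c^2\pi^2))$.
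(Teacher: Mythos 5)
Your proposal is correct, but note that the paper does not prove this lemma at all --- it simply cites \cite{Lalley2013}, and the argument given there is precisely the Maurey--Pisier rotation/interpolation proof you outline (symmetrization, the path $Y_\theta = X\cos\theta + Y\sin\theta$, the conditional Gaussian MGF bound $\exp(\lambda^2\pi^2c^2/8)$, then Chernoff). Your constant-tracking works out: optimizing the Chernoff bound yields $\exp(-2t^2/(\pi^2c^2))$, which is in fact stronger than the stated $\exp(-t^2/(c^2\pi^2))$, and your reading of the Lipschitz hypothesis as a Euclidean gradient bound $\|\nabla F\|\le c$ is the intended one.
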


\begin{lemma}\label{classic}
With $p<n$, let $\tilde{\beta}$ be the MLEs of a generalized linear model.  Assume the  penalty function $p(\cdot)$ is separable, and assume Conditions~1 - 6 hold. Furthermore, assume $n_c \to \infty$ and $n_c/n \to 0$ as $n\to \infty$, and the size of the splits $K$ satisfies
\begin{align*}
K^{-1}n_c^{-2}n^2 \to 0.
\end{align*}
Then, CV($n_v$) with $K$ times subsampling is restricted model selection consistent.
\end{lemma}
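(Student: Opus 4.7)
Plan. The strategy is to mimic the proof of Theorem~\ref{in-con}, replacing the Lasso-specific shrinkage term by a Taylor expansion of the GLM log-likelihood around the restricted MLE. For each split $s$ and candidate $\alpha\in\cA$, let $\bbeta_\alpha^*$ denote the Kullback--Leibler projection of $\bbeta$ onto the $\alpha$-submodel; for $\alpha\in\cA_c$ this is just the restriction of the truth to $\alpha$. Since $p<n$ and Conditions~\ref{asy}--\ref{IC} hold, standard GLM theory gives the Bahadur representation
\begin{align*}
\tilde\bbeta_{s^c,\alpha} - \bbeta_\alpha^* = [-\ddot\ell_{s^c,\alpha}(\bbeta_\alpha^*)]^{-1}\dot\ell_{s^c,\alpha}(\bbeta_\alpha^*) + o_p(n_c^{-1/2}),
\end{align*}
where $\dot\ell_{s^c,\alpha}$ and $\ddot\ell_{s^c,\alpha}$ are the score and Hessian computed from the construction set. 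Plugging this into a second-order Taylor expansion of the validation-sample negative log-likelihood $L_s(\bbeta) = -\by_s'\bX_{s,\alpha}\bbeta + \bone'\bb(\bX_{s,\alpha}\bbeta)$ at $\bbeta_\alpha^*$ and averaging over $r$ splits gives the decomposition
\begin{align*}
\hat\Gamma_\alpha = \hat C + \Delta_\alpha + Q_\alpha + X_\alpha + R_\alpha,
\end{align*}
where $\hat C$ is independent of $\alpha$, $\Delta_\alpha = \E[n_v^{-1}L_s(\bbeta_\alpha^*) - n_v^{-1}L_s(\bbeta)]$ is the Kullback--Leibler gap, $Q_\alpha$ is the quadratic-in-MLE term, $X_\alpha$ is the linear cross-term, and $R_\alpha$ collects higher-order remainders.

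Using Condition~\ref{design} to replace $n_v^{-1}$ times the validation Hessian by $-n_c^{-1}\ddot\ell_{s^c,\alpha}$ up to $o_p(1)$ collapses the quadratic term to $Q_\alpha = (2n_c^2)^{-1}\dot\ell_{s^c,\alpha}'[-\ddot\ell_{s^c,\alpha}]^{-1}\dot\ell_{s^c,\alpha}(1+o_p(1))$, whose data-expectation is $d_\alpha\phi/(2n_c)$ by the standard GLM information identity. With this clean decomposition, the two-case argument mirrors that in the proof of Theorem~\ref{in-con}. For $\alpha\notin\cA_c$, $\Delta_\alpha$ has a positive liminf (using Condition~\ref{beta} to rule out vanishing signal contributions), which dominates both the $O(n_c^{-1})$ quadratic term and all remainders; a union bound via Lemma~\ref{lemma-tala} combined with the growth control on $N$ from Condition~\ref{Ac} rules out underfitting with probability tending to one. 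For $\alpha\in\cA_c\setminus\{\alpha_*\}$, $\Delta_\alpha=\Delta_{\alpha_*}=0$ and the comparison reduces to $Q_\alpha-Q_{\alpha_*} = (d_\alpha-d_*)\phi/(2n_c)(1+o_p(1))$, a strictly positive gap of order $n_c^{-1}$ that drives selection to $\alpha_*$.

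What remains is controlling $X_\alpha$ and $R_\alpha$ uniformly in $\alpha$. The per-split cross term $n_v^{-1}\dot L_s(\bbeta_\alpha^*)'(\tilde\bbeta_{s^c,\alpha}-\bbeta_\alpha^*)$ is mean-zero with variance $O(d_\alpha/(n_vn_c))$ (since validation and construction scores are independent given the split), so the Monte Carlo average $X_\alpha$ has standard deviation of order $\sqrt{d_\alpha/(rn_vn_c)}$. Under the stipulated rate $r\gg n^2/n_c^2$ together with $n_v\asymp n$ (a consequence of $n_c/n\to 0$), one checks $\sqrt{d_\alpha/(rn_vn_c)} = o(n_c^{-1})$, so $X_\alpha$ is dominated by the signal $(d_\alpha-d_*)/n_c$. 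Combined with a union bound over the $N$ candidate models, justified by Condition~\ref{Ac}, this yields $P(\hat\alpha=\alpha_*)\to 1$, which is the tuning-parameter selection consistency.

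The main obstacle, as I see it, is not the scalar algebra but the uniform-in-$\alpha$ control of the Bahadur remainder and the Taylor cubic term when $d^*$ is allowed to diverge. Condition~\ref{asy}(1b) supplies the moment bounds on $\dddot b$ needed to dominate the cubic, and Condition~\ref{IC} prevents the Fisher information from degenerating on any submodel of size up to $q^*$; without these, the $o_p(n_c^{-1/2})$ approximation would not be uniform over $\alpha$ and the concentration step would fail. A secondary technical point is the mild dependence among Monte Carlo splits (they share the underlying data); a direct variance calculation for the random-split average, in the style of Shao~(1993,~\S4.4), shows this dependence contributes only a constant factor, so the rate $r^{-1}n_c^{-2}n^2\to 0$ indeed suffices.
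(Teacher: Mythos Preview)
Your proposal is correct and follows essentially the same route as the paper: a second-order Taylor/Bahadur expansion of the validation log-likelihood, identification of the $d_\alpha a(\phi)/n_c$ quadratic penalty via Condition~\ref{design}, a mean-zero cross term whose Monte Carlo variance is killed by the rate $r^{-1}n_c^{-2}n^2\to 0$, and a Gaussian-concentration union bound from Lemma~\ref{lemma-tala} combined with Condition~\ref{Ac} to handle the $N$ candidates. The one technical difference is your choice of pivot: you expand around the population KL projection $\bbeta_\alpha^*$, whereas the paper inserts the \emph{full-sample} restricted MLE $\tbbeta_\alpha$ as an intermediate point, yielding the decomposition $\hat\Gamma_\alpha = -n^{-1}\ell_n(\bbeta) + A_{\alpha 1} + \binom{n}{n_v}^{-1}\sum_s A_{\alpha 2,s}$ with $A_{\alpha 1} = n^{-1}(\ell_n(\bbeta)-\ell_n(\tbbeta_\alpha))$ playing the role of your $\Delta_\alpha$ via Wilks' theorem; this buys the paper a direct invocation of the likelihood-ratio asymptotics, while your deterministic pivot makes the KL gap $\Delta_\alpha$ explicit from the outset and arguably cleaner for the wrong-model case.
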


\begin{proof}[Proof of Lemma~\ref{classic}]

Due to the properties of generalized linear models with canonical parameter, we have
\begin{align*}
E(y_i \mid \boldsymbol{x}_i) = \dot b(\boldsymbol{x}_i^{\top}\boldsymbol{\beta}), \quad \sigma_i^2 = a(\phi) \ddot b(\boldsymbol{x}_i^{\top}\boldsymbol{\beta}), \quad i = 1,\cdots, n,
\end{align*}
and define $\sigma^2 = (1/n)\sum_{i=1}^n \sigma_i^2$.
The target is to select the model that minimizes the loss
\begin{align}\label{e-pf1-1}
{\tilde \Gamma}_{\alpha} = \frac{1}{K n_v} \sum_{s\in \mathcal{S}} \Bigl\{-\boldsymbol{y}_s^{\top}(X_s^{\alpha}\tilde{\boldsymbol{\beta}}_{s^c,\alpha}) + \boldsymbol{1}^{\top}b(X_s^{\alpha}\tilde{\boldsymbol{\beta}}_{s^c,\alpha})\Bigr\},
\end{align}
where $\mathcal{S}$ represents the collection of validation sets in different splits and $\boldsymbol{1}$ is an all-one vector.

Denote $E_{\mathcal{S}}$ and $\mathrm{var}_{\mathcal{S}}$ as the expectation and variance with respect to the random selection of $\mathcal{S}$.  By using the equality
\begin{align*}
E_{\mathcal{S}}\biggl(\frac{1}{r}\sum_{s\in \mathcal{S}} a_s \biggr) = \binom{n}{n_v}^{-1}\sum_{s \in \text{ all } s}E(a_s),
\end{align*}
rewriting (\ref{e-pf1-1}), and denoting $\ell_s (\boldsymbol{\beta}) = \boldsymbol{y}_s^{\top}(X_s\boldsymbol{\beta}) - \boldsymbol{1}^{\top}b(X_s\boldsymbol{\beta})$ and $\ell_n (\tilde{\boldsymbol{\beta}}_{\alpha}) = \boldsymbol{y}^{\top}(X_{\alpha}\tilde{\boldsymbol{\beta}}_{\alpha}) - \boldsymbol{1}^{\top}b(X_{\alpha}\tilde{\boldsymbol{\beta}}_{\alpha})$, we have
\begin{align*}
E_{\mathcal{S}}({\tilde \Gamma}_{\alpha})  = & E_{\mathcal{S}} \biggl(-\frac{1}{K n_v} \sum_{s\in \mathcal{S}} \ell_s(\boldsymbol{\beta}^o)\biggr) +  E_{\mathcal{S}} \biggl(\frac{1}{K n_v} \sum_{s\in \mathcal{S}}\bigl\{\ell_s(\boldsymbol{\beta}^o) - \bigl(\boldsymbol{y}^{\top}_s(X_s^{\alpha}\tilde{\boldsymbol{\beta}}_{\alpha}) - \boldsymbol{1}^{\top}b(X_s^{\alpha}\tilde{\boldsymbol{\beta}}_{\alpha})\bigr)\bigr\}\bigg) \\
+ & E_{\mathcal{S}} \biggl( \frac{1}{K n_v}\sum_{s\in \mathcal{S}}\bigl\{\bigl(\boldsymbol{y}^{\top}_s(X_s^{\alpha}\tilde{\boldsymbol{\beta}}_{\alpha}) - \boldsymbol{1}^{\top}b(X_s^{\alpha}\tilde{\boldsymbol{\beta}}_{\alpha})\bigr) - \bigl(\boldsymbol{y}^{\top}_s(X_s^{\alpha}\tilde{\boldsymbol{\beta}}_{s^c, \alpha}) - \boldsymbol{1}^{\top}b(X_s^{\alpha}\tilde{\boldsymbol{\beta}}_{s^c, \alpha})\bigr) \bigr\}\biggr) \\
= & E \biggl(-\frac{1}{n} \ell_n(\boldsymbol{\beta}^o) + \frac{1}{n}\bigl(\ell_n(\boldsymbol{\beta}^o) - \ell_n(\tilde{\boldsymbol{\beta}}_{\alpha})\bigr) \\
+ & \binom{n}{n_v}^{-1}\sum_{s \in  \text{ all } s}\frac{1}{n_v}\bigl\{\boldsymbol{y}_s^{\top}(X_s^{\alpha}\tilde{\boldsymbol{\beta}}_{\alpha} - X_s^{\alpha}\tilde{\boldsymbol{\beta}}_{s^c, \alpha}) - \boldsymbol{1}^{\top}\bigl(b(X_s^{\alpha}\tilde{\boldsymbol{\beta}}_{\alpha}) - b(X_s^{\alpha}\tilde{\boldsymbol{\beta}}_{s^c, \alpha})\bigr)\bigr\}\bigg) \\
= & -\frac{1}{n} E (\ell_n(\boldsymbol{\beta}^o)) + E(A_{\alpha 1}) + \binom{n}{n_v}^{-1}\sum_{s \in  \text{ all } s}E(A_{\alpha 2, s}).
\end{align*}

For different $\alpha$, $E(\ell_n(\boldsymbol{\beta}^o))$ stays the same, so we only need to focus on $A_{\alpha 1}$ and $A_{\alpha 2,s}$.

From Wilks' theorem, it is known that, if $\alpha \in \cA\setminus\cA_c$, as $n\to\infty$, we have $A_{\alpha 1}\stackrel{\mathcal{D}}{\to}(1/2) \chi^2(k_{\alpha})$, where $k_{\alpha} = d_0 - d_{\alpha 0}$, $d_{\alpha 0} = |\{j: \beta_j \in \alpha \cap \alpha_0\}|$, i.e., $k_{\alpha}$ is the number of false negatives. This means $E(A_{\alpha 1}) = k_{\alpha}$; otherwise, $E(A_{\alpha 1}) = O(1/n)$.

For any $s$,
\begin{align*}
& \boldsymbol{1}^{\top}\bigl(b(X_s^{\alpha}\tilde{\boldsymbol{\beta}}_{\alpha}) - b(X_s^{\alpha}\tilde{\boldsymbol{\beta}}_{s^c, \alpha})\bigr) = \bigl(\dot b(X_s^{\alpha}\tilde{\boldsymbol{\beta}}_{\alpha})\bigr)^{\top}X_s^{\alpha}(\tilde{\boldsymbol{\beta}}_{\alpha} - \tilde{\boldsymbol{\beta}}_{s^c, \alpha}) \\
& - \frac{1}{2}(\tilde{\boldsymbol{\beta}}_{\alpha} - \tilde{\boldsymbol{\beta}}_{s^c, \alpha})^{\top}(X_s^{\alpha})^{\top}\ddot b(X_s^{\alpha}\tilde{\boldsymbol{\beta}}_{\alpha})X_{s, \alpha}(\tilde{\boldsymbol{\beta}}_{\alpha} - \tilde{\boldsymbol{\beta}}_{s^c, \alpha}) + o(1).
\end{align*}
Define $u_{s^c}(\boldsymbol{\gamma}) = (1/n_c)(X_{s^c}^{\alpha})^{\top}\bigl(\boldsymbol{y}_{s^c} - \dot b(X_{s^c}^{\alpha}\boldsymbol{\gamma})\bigr)$, then $\tilde{\boldsymbol{\beta}}_{s^c, \alpha}$ is the solution to $u_{s^c}(\boldsymbol{\gamma}) = 0$. By Taylor expansion, we get
\[
\tilde{\boldsymbol{\beta}}_{\alpha} - \tilde{\boldsymbol{\beta}}_{s^c, \alpha} = \bigl(\dot u_{s^c}(\tilde{\boldsymbol{\beta}}_{\alpha})\bigr)^{-1}u_{s^c}(\tilde{\boldsymbol{\beta}}_{\alpha})(1+ o(1)),
\]
where $\dot u_{s^c}(\tilde{\boldsymbol{\beta}}_{\alpha}) = -(1/n_c)(X_{s^c}^{\alpha})^{\top}{\ddot b}(X_{s^c}^{\alpha}\tilde{\boldsymbol{\beta}}_{\alpha})X_{s^c}^{\alpha}$.

Define $D_{s, \alpha} = \ddot b^{1/2}(X_s^{\alpha}\tilde{\boldsymbol{\beta}}_{\alpha})X_{s^c}^{\alpha}$, then
\begin{align*}
 A_{\alpha 2, s} &= \frac{1}{n_v} \bigl(\boldsymbol{y}_s - \dot b(X_s^{\alpha}\tilde{\boldsymbol{\beta}}_{\alpha})\bigr)^{\top}X_s^{\alpha}(\tilde{\boldsymbol{\beta}}_{\alpha} - \tilde{\boldsymbol{\beta}}_{s^c, \alpha}) \\
&+ \frac{1}{2 n_v}(\tilde{\boldsymbol{\beta}}_{\alpha} - \tilde{\boldsymbol{\beta}}_{s^c, \alpha})^{\top}(X_s^{\alpha})^{\top}\ddot b(X_s^{\alpha}\tilde{\boldsymbol{\beta}}_{\alpha})X_s^{\alpha}(\tilde{\boldsymbol{\beta}}_{\alpha} - \tilde{\boldsymbol{\beta}}_{s^c, \alpha}) + o(1/n_v) \\
&=  \frac{1}{n_v} \bigl(\boldsymbol{y}_s - \dot b(X_s^{\alpha}\tilde{\boldsymbol{\beta}}_{\alpha})\bigr)^{\top}X_s^{\alpha}\bigl(\dot u_{s^c}(\tilde{\boldsymbol{\beta}}_{\alpha})\bigr)^{-1}u_{s^c}(\tilde{\boldsymbol{\beta}}_{\alpha}) + o(1/n_v)\\
&+  \frac{1}{2 n_v} \bigl(\boldsymbol{y}_{s^c} - \dot b(X_{s^c}^{\alpha}\tilde{\boldsymbol{\beta}}_{\alpha})\bigr)^{\top}\bigl(\ddot b(X_s^{\alpha}\tilde{\boldsymbol{\beta}}_{\alpha})^{-1/2}\bigr)D_{s,\alpha}\bigl(D^{\top}_{s,\alpha}D_{s,\alpha}\bigr)^{-1} \\
&\times  \bigl((X_s^{\alpha})^{\top}\ddot b(X_s^{\alpha}\tilde{\boldsymbol{\beta}}_{\alpha})X_s^{\alpha}\bigr)\bigl((X_{s^c}^{\alpha})^{\top}\ddot b(X_s^{\alpha}\tilde{\boldsymbol{\beta}}_{\alpha})X_{s^c}^{\alpha}\bigr)^{-1} \\
&\times  D^{\top}_{s, \alpha}\bigl(\ddot b(X_s^{\alpha}\tilde{\boldsymbol{\beta}}_{\alpha})^{-1/2}\bigr)\bigl(\boldsymbol{y}_{s^c} - \dot b(X_{s^c}^{\alpha}\tilde{\boldsymbol{\beta}}_{\alpha})\bigr)(1+o(1)) \\
&=  B_{\alpha} + C_{\alpha}.
\end{align*}

By plugging in the expansion form of $\dot u_{s^c}(\cdot)$ and $u_{s^c}(\cdot)$,
\[
 B_{\alpha} = -\frac{1}{n_v}\bigl(\boldsymbol{y}_s - \dot b(X_s^{\alpha}\tilde{\boldsymbol{\beta}}_{\alpha})\bigr)^{\top}X_s^{\alpha}\bigl((X_{s^c}^{\alpha})^{\top}\ddot b(X_{s^c}^{\alpha}\tilde{\boldsymbol{\beta}}_{\alpha})X_{s^c}^{\alpha}\bigr)^{-1} (X_{s^c}^{\alpha})^{\top}\bigl(\boldsymbol{y}_{s^c} - \dot b (X_{s^c}^{\alpha}\tilde{\boldsymbol{\beta}}_{\alpha})\bigr)(1+o(1)).
\]
From Conditions~5 and 6, straight calculations lead to
\[
E(B_{\alpha}) = 0,\quad \mathrm{var}(B_{\alpha}) = d_{\alpha}a(\phi)(n_c n_v)^{-1/2}(1+o(1)).
\]
For $C_{\alpha}$ we have,
\begin{align*}
 C_{\alpha} &= \frac{1}{2 n_c} \bigl(\boldsymbol{y}_{s^c} - \dot b(X_{s^c}^{\alpha}\tilde{\boldsymbol{\beta}}_{\alpha})\bigr)^{\top}\bigl(\ddot b(X_s^{\alpha}\tilde{\boldsymbol{\beta}}_{\alpha})^{-1/2}\bigr) D_{s, \alpha}\bigl(D^{\top}_{s, \alpha}D_{s, \alpha}\bigr)^{-1}D^{\top}_{s,\alpha} \\
& \times \bigl(\ddot b(X_s^{\alpha}\tilde{\boldsymbol{\beta}}_{\alpha})^{-1/2}\bigr)\bigl(\boldsymbol{y}_{s^c} - \dot b(X_{s^c}^{\alpha}\tilde{\boldsymbol{\beta}}_{\alpha})\bigr)(1+o(1)).
\end{align*}
Thus, after taking expectation we have,
\[
E (A_{\alpha2, s}) = d_{\alpha}a(\phi)/n_c + o(1/n_c).
\]
If $\alpha \in \cA\setminus \cA_c$,
\begin{align*}
\tilde{\Gamma}_{\alpha_*} - \tilde{\Gamma}_{\alpha} = \frac{1}{n}\bigl(\ell_n(\tilde{\bbeta}_{\alpha_*}) - \ell_n(\tilde{\bbeta}_{\alpha})\bigr) + O(1/n_c).
\end{align*}
From Lemma~\ref{lemma-tala} and Condition~3, by exploiting Gaussian concentration, $\forall \varepsilon > 0$, we have
\begin{align*}
 R \cdot \mathrm{pr}\biggl \{n_c\biggl |\max_{\alpha \in \cA\setminus\cA_c} \Bigl|\frac{1}{n}\bigl(\ell_n(\tilde{\boldsymbol{\beta}}_{\alpha_*}) - \ell_n(\tilde{\boldsymbol{\beta}}_{\alpha})\bigr) \Bigr|
-  E\biggl(\max_{\alpha \in \cA\setminus\cA_c} \Bigl|\frac{1}{n}\bigl(\ell_n(\tilde{\boldsymbol{\beta}}_{\alpha_*}) - \ell_n(\tilde{\boldsymbol{\beta}}_{\alpha})\bigr) \Bigr|\biggr)\biggr| > \varepsilon \biggr\} \to 0.
\end{align*}
The parallel result for $\alpha \in \cA_c$ but $\alpha \neq \alpha_*$ holds similarly. Therefore, as $n \to \infty$, $\mathrm{pr} \{{\hat \alpha} \in \alpha_* \} \to 1$.

\end{proof}

\begin{table}[!h]
\caption{Comparisons in linear regression with $\rho=-0.5$. Results are reported in the form of mean (standard error).  FP, false positive; FN, false negative;  PE, prediction error.}
\label{tb::rho=-0.5}\par
\vskip .2cm
\begin{center}
\begin{tabular}{lrrr}
\hline
Method & \multicolumn{3}{c}{$\rho = -0.5$}\\ \hline
Lasso &  FP & FN & PE  \\ \hline
CV($n_v$)&0.03(0.02)&0.02(0.01)&1.01(0.01)\\
K-fold&30.53(2.84)&0.00(0.00)&1.09(0.01)\\
1SE&1.54(0.21)&0.00(0.00)&1.15(0.01)\\
AIC&469.97(1.39)&0.00(0.00)&1.38(0.01)\\
BIC&2.18(0.17)&0.00(0.00)&1.12(0.01)\\
EBIC&0.91(0.10)&0.00(0.00)&1.13(0.01)\\\hline
SCAD &  FP & FN & PE  \\ \hline
CV($n_v$)&0.06(0.03)&0.01(0.01)&1.01(0.01)\\
K-fold&24.48(2.70)&0.00(0.00)&1.03(0.01)\\
1SE&0.30(0.09)&0.00(0.00)&1.08(0.01)\\
AIC&25.20(2.02)&0.05(0.03)&1.09(0.03)\\
BIC&0.70(0.09)&0.05(0.03)&1.10(0.03)\\
EBIC&0.16(0.04)&0.05(0.03)&1.11(0.03)\\\hline
MCP &  FP & FN & PE  \\ \hline
CV($n_v$)&0.02(0.01)&0.00(0.00)&1.01(0.01)\\
K-fold&4.76(0.82)&0.00(0.00)&1.02(0.01)\\
1SE&0.04(0.04)&0.00(0.00)&1.07(0.01)\\
AIC&77.29(0.96)&0.00(0.00)&1.15(0.01)\\
BIC&0.52(0.11)&0.00(0.00)&1.02(0.01)\\
EBIC&0.06(0.03)&0.00(0.00)&1.02(0.01)\\
\hline
\end{tabular}
\end{center}
\end{table}

\noindent {\bf A.2 Additional numerical results}

We conducted an additional simulation  for the setting in Example 1(i) when $\rho=-0.5$ with the results summarized in Table \ref{tb::rho=-0.5}.  In this case, CV($n_v$) works very well compared with other methods and we skip the detailed  discussion since the message is very similar to the cases of $\rho=0$ and $\rho=0.5$.

%
%
\par



\end{document}